\newcommand{\comment}[1]{}
\newcommand{\abs}[1]{\ensuremath{|#1|}}
\newcommand{\Abs}[1]{\ensuremath{\left|#1\right|}}
\newcommand{\norm}[2]{\ensuremath{|\!|#1|\!|_{#2}}}
\newcommand{\normbig}[2]{\ensuremath{\big|\!\big|#1\big|\!\big|_{#2}}}
\newcommand{\tr}{\textnormal{tr}}
\newcommand{\trace}[1]{\ensuremath{\tr (#1)}}
\newcommand{\Trace}[1]{\ensuremath{\tr \left( #1 \right)}}
\newcommand{\ptr}[1]{\textnormal{tr}_{\textnormal{\tiny #1}}}
\newcommand{\supp}[1]{\textnormal{supp}\, \{ #1 \}}
\newcommand{\rank}{\textnormal{rank}}
\newcommand{\idx}[2]{{#1}_{\textnormal{\tiny #2}}}
\newcommand{\ket}[1]{| #1 \rangle}
\newcommand{\keti}[2]{| #1 \rangle_{\textnormal{\tiny #2}}}
\newcommand{\braket}[2]{\langle #1 | #2 \rangle}
\newcommand{\bracket}[3]{\langle #1 | #2 | #3 \rangle}
\newcommand{\proj}[2]{| #1 \rangle\!\langle #2 |}
\newcommand{\proji}[3]{| #1 \rangle\!\langle #2 |_{\textnormal{\tiny #3}}}
\newcommand{\iso}{\cong}
\newcommand{\kron}{\otimes}
\newcommand{\eps}{\varepsilon}
\newcommand{\h}{\ensuremath{\mathcal{H}}}
\newcommand{\hi}[1]{\ensuremath{\mathcal{H}_{\textnormal{\tiny #1}}}}
\newcommand{\hA}{\hi{A}}
\newcommand{\hB}{\hi{B}}
\newcommand{\hC}{\hi{C}}
\newcommand{\hD}{\hi{D}}
\newcommand{\hX}{\hi{X}}
\newcommand{\hY}{\hi{Y}}
\newcommand{\hR}{\hi{R}}
\newcommand{\hAB}{\hi{AB}}
\newcommand{\hAC}{\hi{AC}}
\newcommand{\hABC}{\hi{ABC}}
\newcommand{\idi}[1]{\ensuremath{\mathds{1}_{\textnormal{\tiny #1}}}}
\newcommand{\idA}{\idi{A}}
\newcommand{\idB}{\idi{B}}
\newcommand{\idC}{\idi{C}}
\newcommand{\idE}{\idi{E}}
\newcommand{\idX}{\idi{X}}
\newcommand{\idAB}{\idi{AB}}
\newcommand{\idBC}{\idi{BC}}
\newcommand{\opidi}[1]{\ensuremath{\mathcal{I}_{\textnormal{\tiny #1}}}}
\newcommand{\opidA}{\opidi{A}}
\newcommand{\linops}[1]{\ensuremath{\mathcal{L}(#1)}}
\newcommand{\hermops}[1]{\ensuremath{\mathcal{L}^\dagger(#1)}}
\newcommand{\posops}[1]{\ensuremath{\mathcal{P}(#1)}}
\newcommand{\normstates}[1]{\ensuremath{\mathcal{S}_{=}(#1)}}
\newcommand{\subnormstates}[1]{\ensuremath{\mathcal{S}_{\leq}(#1)}}
\newcommand{\rhot}{\ensuremath{\tilde{\rho}}}
\newcommand{\rhob}{\ensuremath{\bar{\rho}}}
\newcommand{\rhoh}{\ensuremath{\hat{\rho}}}
\newcommand{\rhoA}{\ensuremath{\idx{\rho}{A}}}
\newcommand{\rhoB}{\ensuremath{\idx{\rho}{B}}}
\newcommand{\rhoAB}{\ensuremath{\idx{\rho}{AB}}}
\newcommand{\rhotAB}{\ensuremath{\idx{\rhot}{AB}}}
\newcommand{\rhobAB}{\ensuremath{\idx{\rhob}{AB}}}
\newcommand{\rhoAC}{\ensuremath{\idx{\rho}{AC}}}
\newcommand{\rhotAC}{\ensuremath{\idx{\rhot}{AC}}}
\newcommand{\rhobAC}{\ensuremath{\idx{\rhob}{AC}}}
\newcommand{\rhoABC}{\ensuremath{\idx{\rho}{ABC}}}
\newcommand{\rhoABE}{\ensuremath{\idx{\rho}{ABE}}}
\newcommand{\rhotABC}{\ensuremath{\idx{\rhot}{ABC}}}
\newcommand{\rhobABC}{\ensuremath{\idx{\rhob}{ABC}}}
\newcommand{\sigmat}{\ensuremath{\tilde{\sigma}}}
\newcommand{\sigmab}{\ensuremath{\bar{\sigma}}}
\newcommand{\sigmah}{\ensuremath{\hat{\sigma}}}
\newcommand{\sigmaB}{\ensuremath{\idx{\sigma}{B}}}
\newcommand{\sigmaC}{\ensuremath{\idx{\sigma}{C}}}
\newcommand{\sigmatB}{\ensuremath{\idx{\sigmat}{B}}}
\newcommand{\sigmabB}{\ensuremath{\idx{\sigmab}{B}}}
\newcommand{\taut}{\ensuremath{\tilde{\tau}}}
\newcommand{\taub}{\ensuremath{\bar{\tau}}}
\newcommand{\tauh}{\ensuremath{\hat{\tau}}}
\newcommand{\tauB}{\ensuremath{\idx{\tau}{B}}}
\newcommand{\tauAB}{\ensuremath{\idx{\tau}{AB}}}
\newcommand{\tauAC}{\ensuremath{\idx{\tau}{AC}}}
\newcommand{\tauABC}{\ensuremath{\idx{\tau}{ABC}}}
\newcommand{\tauAD}{\ensuremath{\idx{\tau}{AD}}}
\newcommand{\tauCD}{\ensuremath{\idx{\tau}{CD}}}
\newcommand{\tauCDE}{\ensuremath{\idx{\tau}{CDE}}}
\newcommand{\tautAB}{\ensuremath{\idx{\taut}{AB}}}
\newcommand{\taubCD}{\ensuremath{\idx{\taub}{CD}}}
\newcommand{\tautCD}{\ensuremath{\idx{\taut}{CD}}}
\newcommand{\omegat}{\ensuremath{\tilde{\omega}}}
\newcommand{\omegab}{\ensuremath{\bar{\omega}}}
\newcommand{\omegah}{\ensuremath{\hat{\omega}}}
\newcommand{\omegabD}{\ensuremath{\idx{\omegab}{D}}}
\newcommand{\omegatD}{\ensuremath{\idx{\omegat}{D}}}
\newcommand{\phiA}{\ensuremath{\idx{\phi}{A}}}
\newcommand{\phiB}{\ensuremath{\idx{\phi}{B}}}
\newcommand{\phiAB}{\ensuremath{\idx{\phi}{AB}}}
\newcommand{\psiB}{\ensuremath{\idx{\psi}{B}}}
\newcommand{\psiAB}{\ensuremath{\idx{\psi}{AB}}}
\newcommand{\phit}{\ensuremath{\tilde{\phi}}}
\newcommand{\PiA}{\ensuremath{\idx{\Pi}{A}}}
\newcommand{\PiB}{\ensuremath{\idx{\Pi}{B}}}
\newcommand{\PiU}{\ensuremath{\idx{\Pi}{U}}}
\newcommand{\PiV}{\ensuremath{\idx{\Pi}{V}}}
\newcommand{\PiUV}{\ensuremath{\idx{\Pi}{UV}}}
\newcommand{\chh}[5]{\ensuremath{H_{#1}^{#2}(\textnormal{#3}|\textnormal{#4})_{#5}}}
\newcommand{\chmin}[3]{\chh{\textnormal{min}}{}{#1}{#2}{#3}}
\newcommand{\chmineps}[4]{\chh{\textnormal{min}}{#1}{#2}{#3}{#4}}
\newcommand{\chmax}[3]{\chh{\textnormal{max}}{}{#1}{#2}{#3}}
\newcommand{\chmaxeps}[4]{\chh{\textnormal{max}}{#1}{#2}{#3}{#4}}
\newcommand{\chvn}[3]{\chh{}{}{#1}{#2}{#3}}
\newcommand{\epsball}[2]{\ensuremath{\mathcal{B}^{#1}(#2)}}
\newcommand{\epsballpure}[2]{\ensuremath{\mathcal{B}_\textnormal{p}^{#1}(#2)}}
\theoremstyle{plain}
\newtheorem{lemma}{Lemma}
\newtheorem{theorem}[lemma]{Theorem}
\newtheorem{corollary}[lemma]{Corollary}
\theoremstyle{definition}
\newtheorem{definition}[lemma]{Definition}
\newtheorem{remark}[lemma]{Remark}
\newcommand{\dist}{\bar{D}}
\begin{document}

\title{Duality Between Smooth Min- and Max-Entropies}

\date{October 20, 2009}

\author{Marco \surname{Tomamichel}}
\email[]{marcoto@phys.ethz.ch}
\affiliation{Institute for Theoretical Physics, ETH Zurich, 8093
  Zurich, Switzerland.}
\author{Roger \surname{Colbeck}}
\email[]{colbeck@phys.ethz.ch}
\affiliation{Institute for Theoretical Physics, ETH Zurich, 8093
  Zurich, Switzerland.}
\affiliation{Institute of Theoretical Computer Science, ETH Zurich, 8092
  Zurich, Switzerland.}
\author{Renato \surname{Renner}}
\email[]{renner@phys.ethz.ch}
\affiliation{Institute for Theoretical Physics, ETH Zurich, 8093
  Zurich, Switzerland.}

\begin{abstract}

In classical and quantum information theory, operational quantities
  such as the amount of randomness that can be extracted from a given
  source or the amount of space needed to store given data are
  normally characterized by one of two entropy measures, called smooth
  min-entropy and smooth max-entropy, respectively. While both
  entropies are equal to the von Neumann entropy in certain special
  cases (e.g., asymptotically, for many independent repetitions of the
  given data), their values can differ arbitrarily in the general
  case.

  In this work, a recently discovered duality relation between
  (non-smooth) min- and max-entropies is extended to the smooth case. More precisely,
  it is shown that the smooth min-entropy of a system A conditioned on a
  system~B equals the negative of the smooth max-entropy of A conditioned
  on a purifying system~C. This result immediately implies that certain
  operational quantities (such as the amount of compression and the amount
  of randomness that can be extracted from given data) are related. Such
  relations may, for example, have applications in cryptographic security
  proofs.

\end{abstract}

\maketitle


\section{Introduction}

\label{sec:intro}

Entropies are used to quantitatively characterize
problems in quantum information processing and quantum cryptography.
In the case of many independent and identical instances of a task
(i.i.d.\ limit), the von Neumann entropy is the relevant measure. In
order to go beyond this restriction, the smooth min- and max-entropies
have been introduced.  The smooth min-entropy was introduced in order
to characterize randomness extraction.  It corresponds to the length
of uniform random string that can be generated from a partially
unifrom one~\cite{rennerkoenig05, renner05}.  The smooth max-entropy, on the
other hand, was introduced to characterize information reconciliation.
It gives the amount of communication required between two parties in
order that they can generate a perfectly correlated string from a
partially correlated one~\cite{rennerwolf05}.
Since their initial uses, these entropies have found applications in
many tasks (see for example~\cite{berta08, datta09}) and have been
shown to converge to the von Neumann entropy in the i.i.d.\
limit~\cite{renner05, tomamichel08}.

The smooth entropies can be defined as optimizations of the relevant
non-smooth quantities\,|\,the (non-smooth) min- and
max-entropies\,|\,over a set of nearby states. `Nearby' is specified
via a \emph{smoothing parameter}, the maximum distance from the
original state in an appropriate metric (for precise definitions, see
below).
Often, the smooth entropy is the correct measure when one accounts
for a small error tolerance, whereas the non-smooth entropy
characterizes the zero error case. In the case of privacy
amplification, for example, ideally one wants a protocol in which
two parties, Alice (A) and Bob (B), use a shared string about which an
eavesdropper (E) has partial information and form a secret key about
which E knows nothing. Unfortunately, such a stringent requirement
is usually unattainable. Instead, one tolerates a small probability
that the eavesdropper learns something about the key in order to
achieve a significant key length. In this case, the smooth
min-entropy of A given E characterizes the length of the key, with
the smoothing parameter dependent on the tolerable
error~\cite{rennerkoenig05}.

It has recently been discovered that the min- and max-entropies are
related~\cite{koenig08}. They are dual to one another in the sense
that for a pure state $\rhoABC$ on a tri-partite Hilbert space $\hA
\kron \hB \kron \hC$, the conditional min-entropy of A given B is the
negative of the conditional max-entropy of A given C, i.e.\
$\chmin{A}{B}{\rho} = -\chmax{A}{C}{\rho}$. In this work, we extend
the duality relation to the smooth min- and max-entropies. In order to
do this, a new method of smoothing is required: We propose measuring
the closeness of states used in the definition of the smooth entropies
in terms of a quantity which we call the \emph{purified
  distance}. This forms a metric on the set of sub-normalized states
(positive semi-definite operators with trace at most 1). When defined
in this way, the smooth min- and max-entropies satisfy various natural
properties such as invariance under local isometries and the data
processing inequality (that quantum operations cannot decrease
entropy). The duality not only simplifies many
derivations\footnote{Various relations for the min-entropy hold
  automatically for the max-entropy via the duality,
  e.g.\ the fact that conditioning on an additional system always
  reduces entropy (cf.\ Theorem~\ref{thm:data-proc}).}, but it
provides a connection between seemingly unrelated problems. In
particular, this means that randomness extraction and information
reconciliation can be characterized by the same entropy.

In the context of quantum key distribution, the new relation has the
following application. In order to generate a secure key, Alice and Bob need
to bound the smooth min-entropy of A conditioned on E. Our result
provides a way for them to generate this bound without access to the
eavesdropper's systems. In the worst case, the eavesdropper holds a purification of
the state of A and B. (The data processing inequality
(Theorem~\ref{thm:data-proc}) implies that if she does not, her
information about the key is strictly smaller.) Using the duality
relation, Alice and Bob obtain the desired bound on the smooth
min-entropy by estimating the smooth max-entropy of A given B.

There is an alternative method for going beyond i.i.d.\ in information
theory, known as the \emph{information spectrum method}
\cite{han02,hayashi03,nagaoka07}. Like for smooth entropies, there are
two principal quantities: the \emph{inf-spectral entropy rate} which
is related to the smooth min-entropy and the \emph{sup-spectral
  entropy rate} which is related to the smooth max-entropy
\cite{dattarenner08}. The results of this paper imply that a similar
duality relation holds for the spectral entropy rates.

The remainder of this paper is organized as follows. In
Section~\ref{sec:metrics} we introduce the purified distance and prove
that it is a metric on sub-normalized states. In
Section~\ref{sec:eps-ball} we use this metric to define a \emph{ball}
of states around a particular state. This ball is then used to define
the smooth conditional min- and max-entropies in
Section~\ref{sec:smooth} and to prove that they satisfy data
processing inequalities in Section~\ref{sec:dataproc}.

\section{Metrics on the Set of Sub-Normalized States}
\label{sec:metrics}

Let $\h$ be a finite-dimensional Hilbert space. We use $\linops{\h}$
and $\posops{\h}$ to denote the set of linear operators on $\h$ and
the set of positive semi-definite operators on $\h$, respectively.
We define the set of normalized quantum states by $\normstates{\h}
:= \{ \rho \in \posops{\h} : \tr\,\rho = 1 \}$ and the set of
sub-normalized states by $\subnormstates{\h} := \{ \rho \in
\posops{\h} : 0 < \tr\,\rho \leq 1 \}$. Note that $\linops{\h}
\supset \posops{\h} \supset \subnormstates{\h} \supset
\normstates{\h}$. Given a pure state $\ket{\phi} \in \h$, we use
$\phi = \proj{\phi}{\phi}$ to denote the corresponding projector in
$\posops{\h}$.

We start by introducing a generalization of the trace distance:
\begin{definition}
For $\rho, \tau \in \posops{\h}$, we define the \emph{generalized
  trace distance}
between $\rho$ and $\tau$ as
\begin{equation*}
\dist(\rho, \tau)\ :=\ \max \big\{ \tr\, \{ \rho - \tau \}_+,
\, \tr\, \{ \tau - \rho \}_+ \big\} \, ,
\end{equation*}
where $\{ X \}_+$ denotes the projection of $X$ onto its positive
eigenspace.
\end{definition}
In the case of normalized states, we have $\tr\, \{ \rho - \tau \}_+ =
\tr\, \{ \tau - \rho \}_+$ and recover the usual trace distance
$D(\rho, \tau) := \tr\, \{\rho - \tau\}_+$. The generalized trace
distance can alternatively be expressed in terms of the Schatten
$1$-norm $\norm{X}{1} = \tr\,\abs{X} = \tr\,\sqrt{X^\dagger X}$ as
\begin{equation*} \dist(\rho, \tau)\ =\ \frac{1}{2} \normbig{\rho -
    \tau}{1} + \frac{1}{2} \big| \tr\,\rho - \tr\,\tau \big| \,
\end{equation*} and it is easy to verify that it is a metric on
$\linops{\h}$. The trace distance has a physical interpretation as the
distinguishing advantage between two normalized states. In other
words, the probability $p_{\textrm{dist}}(\rho, \tau)$ of correctly
guessing which of two equiprobable states $\rho$ and $\tau$
is provided is upper bounded by \cite{nielsen00}
\begin{equation}
  \label{eqn:dist-adv}
  p_\textrm{dist}(\rho, \tau) \leq \frac{1}{2}
  \big(1 + D(\rho, \tau) \big).
\end{equation}

Various quantities derived from the fidelity $F(\rho, \tau) =
\norm{\sqrt{\rho} \sqrt{\tau}}{1}$ are used in the literature to
quantify the distance between normalized states.  Its generalization
to sub-normalized states satisfies $0 \leq F(\rho, \tau) \leq
\sqrt{\tr\,\rho}\, \sqrt{\tr\,\tau}$ and is monotonically increasing
under trace preserving completely positive maps (TP-CPMs), i.e.\
$F\big(\mathcal{E}(\rho), \mathcal{E}(\tau) \big) \geq F(\rho, \tau)$
for any TP-CPM $\mathcal{E}$ (cf.\ \cite{nielsen00}, Theorem 9.6).
Moreover, we will often use Uhlmann's theorem \cite{uhlmann85} which
states that, for any purification $\varphi$ of $\rho$, there exists a
purification $\vartheta$ of $\tau$ such that $F(\rho, \tau) =
F(\varphi, \vartheta) = \abs{\braket{\varphi}{\vartheta}}$.  The
fidelity is also symmetric in its arguments, i.e.\ $F(\rho, \tau) =
F(\tau, \rho)$.

For our argument, we need an alternative generalization of the
fidelity to sub-normalized states. The generalization is motivated by
the observation that sub-normalized states can be thought of as
normalized states on a larger space projected onto a subspace. We
write $\bar{\h} \supseteq \h$ if a Hilbert space $\h$ is embedded in
another Hilbert space $\bar{\h}$ and denote the projector onto $\h$ by
$\Pi$.
\begin{definition}
\label{def:generalized-fidelity}
  For $\rho, \tau \in \subnormstates{\h}$, we define the
  \emph{generalized fidelity} between $\rho$ and $\tau$ as
\begin{equation}
\label{eqn:generalized-fidelity}
\bar{F}(\rho, \tau)\ :=\ \sup_{\bar{\h} \supseteq \h}
\mathop{\sup_{\rhob,\, \taub\, \in\, \normstates{\bar{\h}}}}_{\rho =
\Pi\rhob\Pi,\, \tau =\Pi\taub\Pi}\!\!  F(\rhob, \taub) \, .
\end{equation}
\end{definition}

Note that $\bar{F}$ reduces to $F$ when at least one state is
normalized. This can be seen from the following alternative expression
for $\bar{F}$:
\begin{lemma}
  \label{lemma:alt-generalized-fidelity} Let $\rho, \tau \in
  \subnormstates{\h}$. Then,
  \begin{equation*}
    \label{eqn:alt-generalized-fidelity} \bar{F}(\rho, \tau) = F(\rhoh,
    \tauh) = F(\rho, \tau) + \sqrt{(1 - \tr\,\rho)(1 - \tr\,\tau)}\, ,
  \end{equation*}
  where $\rhoh := \rho \oplus (1\!-\!\tr\,\rho)$ and $\tauh :=
  \tau \oplus (1\!-\!\tr\,\tau)$.
\end{lemma}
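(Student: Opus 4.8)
The plan is to establish the two equalities separately. The right-hand equality is a direct computation, while the variational identity $\bar{F}(\rho,\tau) = F(\rhoh,\tauh)$ I would prove by a matching pair of inequalities, the harder of which is the upper bound.

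For the explicit formula, I would use that $\rhoh = \rho \oplus (1-\tr\,\rho)$ and $\tauh = \tau \oplus (1-\tr\,\tau)$ are block diagonal with respect to the orthogonal decomposition $\h \oplus \mathbb{C}$. Hence $\sqrt{\rhoh} = \sqrt{\rho} \oplus \sqrt{1-\tr\,\rho}$ and likewise for $\tauh$, so $\sqrt{\rhoh}\sqrt{\tauh} = (\sqrt{\rho}\sqrt{\tau}) \oplus \sqrt{(1-\tr\,\rho)(1-\tr\,\tau)}$. Since the Schatten $1$-norm is additive over orthogonal blocks, taking $\norm{\cdot}{1}$ yields $F(\rhoh,\tauh) = F(\rho,\tau) + \sqrt{(1-\tr\,\rho)(1-\tr\,\tau)}$ immediately. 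This block-additivity $F(A \oplus B, C \oplus D) = F(A,C) + F(B,D)$ is the single computational fact I will reuse below.

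For the lower bound $\bar{F}(\rho,\tau) \geq F(\rhoh,\tauh)$, I would simply exhibit $\rhoh,\tauh$ as a feasible point of the supremum in Definition~\ref{def:generalized-fidelity}: choosing $\bar{\h} = \h \oplus \mathbb{C}$ with $\Pi$ the projector onto $\h$, one checks $\rhoh,\tauh \in \normstates{\bar{\h}}$ (they have unit trace by construction) and $\Pi\rhoh\Pi = \rho$, $\Pi\tauh\Pi = \tau$, so $F(\rhoh,\tauh)$ occurs among the values over which $\bar{F}$ is the supremum.

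The main work, and the step I expect to be the real obstacle, is the upper bound $\bar{F}(\rho,\tau) \leq F(\rhoh,\tauh)$, which must hold for every embedding $\bar{\h} \supseteq \h$ and every $\rhob,\taub \in \normstates{\bar{\h}}$ with $\Pi\rhob\Pi = \rho$ and $\Pi\taub\Pi = \tau$. The key idea is to introduce the pinching channel $\mathcal{P}(\omega) := \Pi\omega\Pi + (\id-\Pi)\omega(\id-\Pi)$, a TP-CPM, and invoke monotonicity of the fidelity to obtain $F(\rhob,\taub) \leq F(\mathcal{P}(\rhob),\mathcal{P}(\taub))$. Setting $R := (\id-\Pi)\rhob(\id-\Pi)$ and $S := (\id-\Pi)\taub(\id-\Pi)$, the pinched states are block diagonal, $\mathcal{P}(\rhob) = \rho \oplus R$ and $\mathcal{P}(\taub) = \tau \oplus S$, so block-additivity gives $F(\mathcal{P}(\rhob),\mathcal{P}(\taub)) = F(\rho,\tau) + F(R,S)$. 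Because $\tr\,R = 1-\tr\,\rho$ and $\tr\,S = 1-\tr\,\tau$, the sub-normalized fidelity bound $F(R,S) \leq \sqrt{\tr\,R}\sqrt{\tr\,S}$ gives $F(\rhob,\taub) \leq F(\rho,\tau) + \sqrt{(1-\tr\,\rho)(1-\tr\,\tau)} = F(\rhoh,\tauh)$, and taking the supremum over all feasible extensions finishes the proof. The degenerate case $\tr\,\rho = 1$ (forcing $R = 0$ and $\bar{F} = F$) is already covered by these bounds, so no separate treatment is needed.
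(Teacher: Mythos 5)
Your proposal is correct and follows essentially the same route as the paper: the upper bound via the pinching channel, monotonicity of the fidelity, block-additivity of $F$ over orthogonal subspaces, and the bound $F(R,S)\leq\sqrt{\tr\,R}\,\sqrt{\tr\,S}$, with achievability by the explicit extension $\rhoh,\tauh$ on $\h\oplus\mathbb{C}$. You merely spell out the block-diagonal computation for $F(\rhoh,\tauh)$ that the paper dismisses as ``easy to verify.''
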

\begin{proof}
  Let $\bar{\h}, \rhob$ and $\taub$ be any combination of Hilbert
  space and states that are candidates for the maximization
  in~\eqref{eqn:generalized-fidelity}.  Let $\mathcal{E} : \bar{\h} \to
  \bar{\h}$ be the pinching $\mathcal{E} : \rhob \mapsto \Pi \rhob \Pi +
  \Pi_\perp \rhob \Pi_\perp$, where $\Pi$ is the projector onto $\h$ and
  $\Pi_\perp=\idi{$\bar{\h}$}-\Pi$ its orthogonal complement. Hence,
  \begin{align*}
    F(\rhob, \taub) &\leq F \big( \mathcal{E}(\rhob), \mathcal{E}(\taub) \big) \\
    &= F(\rho, \tau) + F\big( \Pi_\perp\rhob \Pi_\perp, \Pi_\perp\taub \Pi_\perp \big) \\
    &\leq F(\rho, \tau)  + \sqrt{(1-\tr\,\rho)(1-\tr\,\tau)} \, .
  \end{align*}
  It is easy to verify that the upper bound is achieved by $\hat{\h} = \h
  \oplus \mathbb{C}$, $\rhoh$ and $\tauh$.
\end{proof}
%

We define a metric based on the fidelity, analogously to the one
proposed in~\cite{nielsen04, rastegin06}\footnote{The quantity
  $C(\rho, \tau) = \sqrt{1 - F^2(\rho, \tau)}$ is introduced
  in~\cite{nielsen04}, where they also show that it is a metric on
  $\normstates{\h}$. In~\cite{rastegin06} the quantity is called sine
  distance and some of its properties are explored.}:
\begin{definition}
  For $\rho, \tau \in \subnormstates{\h}$, we define the
  \emph{purified distance} between $\rho$ and $\tau$ as
\begin{equation*}
P(\rho, \tau) := \sqrt{1 - \bar{F}(\rho, \tau)^2}
\end{equation*}
\end{definition}
The name is motivated by the fact that, for normalized states $\rho,
\tau \in \normstates{\h}$, we can write $P(\rho, \tau)$ as the minimum
trace distance between purifications $\ket{\varphi}$ of $\rho$ and
$\ket{\vartheta}$ of $\tau$.  More precisely, using Uhlmann's
theorem~\cite{uhlmann85}, we have
\begin{align*}
P(\rho, \tau)
&= \sqrt{1 - {F(\rho, \tau)}^2}
= \sqrt{1 - \max_{\varphi, \vartheta} \Abs{\braket{\varphi}{\vartheta}}^2} \\
&= \min_{\varphi, \vartheta} \sqrt{1 - \Abs{\braket{\varphi}{\vartheta}}^2}
= \min_{\varphi, \vartheta} \dist(\varphi, \vartheta) \, .
\end{align*}

\begin{lemma}
\label{lemma:metric}
The purified distance $P$ is a metric on $\subnormstates{\h}$.
\end{lemma}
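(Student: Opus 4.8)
The plan is to verify the four metric axioms for $P$ on $\subnormstates{\h}$. Non-negativity, symmetry and the identity of indiscernibles are routine, so I would dispose of them first and concentrate on the triangle inequality, which is the only substantial step. The unifying idea is to reduce every statement about sub-normalized states $\rho,\tau,\sigma$ to one about their normalized extensions $\rhoh,\tauh,\sigmah \in \normstates{\h\oplus\mathbb{C}}$ by invoking Lemma~\ref{lemma:alt-generalized-fidelity}, which gives $P(\rho,\tau) = \sqrt{1-\bar F(\rho,\tau)^2} = \sqrt{1-F(\rhoh,\tauh)^2}$. For the easy axioms: since $\rhoh,\tauh$ are normalized, $0 \le F(\rhoh,\tauh) \le 1$, so $P(\rho,\tau)$ is a well-defined real number in $[0,1]$; symmetry of $P$ is inherited from symmetry of $F$; and $P(\rho,\tau)=0$ holds iff $F(\rhoh,\tauh)=1$, which for normalized states means $\rhoh=\tauh$, equivalently (reading off the $\h$-block of $\rhoh = \rho\oplus(1\!-\!\tr\,\rho)$) $\rho=\tau$.

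For the triangle inequality $P(\rho,\sigma) \le P(\rho,\tau) + P(\tau,\sigma)$ I would pass to purifications and exploit the characterization, valid for the \emph{normalized} states $\rhoh,\tauh$, that $P(\rho,\tau) = \min_{\varphi,\vartheta}\dist(\varphi,\vartheta)$ as $\varphi,\vartheta$ range over purifications of $\rhoh,\tauh$ (on these normalized pure states $\dist$ is just the trace distance). The crucial move is to fix a single purification $\ket{\hat\vartheta}$ of $\tauh$ first and then apply Uhlmann's theorem~\cite{uhlmann85} twice relative to it: this produces a purification $\ket{\hat\varphi}$ of $\rhoh$ with $\abs{\braket{\hat\varphi}{\hat\vartheta}} = F(\rhoh,\tauh)$ and a purification $\ket{\hat\chi}$ of $\sigmah$ with $\abs{\braket{\hat\vartheta}{\hat\chi}} = F(\tauh,\sigmah)$, whence $\dist(\hat\varphi,\hat\vartheta) = P(\rho,\tau)$ and $\dist(\hat\vartheta,\hat\chi) = P(\tau,\sigma)$. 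Since $\ket{\hat\varphi}$ and $\ket{\hat\chi}$ are particular purifications of $\rhoh$ and $\sigmah$, minimality gives $\dist(\hat\varphi,\hat\chi) \ge P(\rho,\sigma)$, and the triangle inequality of the trace distance applied to the three pure states closes the argument:
\begin{equation*}
P(\rho,\sigma) \le \dist(\hat\varphi,\hat\chi) \le \dist(\hat\varphi,\hat\vartheta) + \dist(\hat\vartheta,\hat\chi) = P(\rho,\tau) + P(\tau,\sigma).
\end{equation*}

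The hard part is securing a \emph{common} purification of the middle state $\tauh$: two independent invocations of Uhlmann's theorem would in general yield two different purifications of $\tauh$, and the triangle inequality for the trace distance requires all three vectors to sit in one space sharing the same $\ket{\hat\vartheta}$. Fixing $\ket{\hat\vartheta}$ before applying the theorem — which is exactly what the ``for any purification'' form quoted earlier permits — resolves this, provided the purifying system is taken large enough (a copy of $\h\oplus\mathbb{C}$ suffices) to realize the optimal purifications of both $\rhoh$ and $\sigmah$. I would note in passing that the triangle inequality also follows immediately by observing that $\rho\mapsto\rhoh$ embeds $\subnormstates{\h}$ injectively into $\normstates{\h\oplus\mathbb{C}}$ and that $P$ is the pullback of $\sqrt{1-F^2}$, already known to be a metric on normalized states~\cite{nielsen04}; the purification argument above is preferable, however, as it is self-contained and reuses the machinery developed just before the lemma.
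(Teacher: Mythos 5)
Your proof is correct and follows essentially the same route as the paper: reduce to the normalized extensions $\rhoh,\tauh,\sigmah$ via Lemma~\ref{lemma:alt-generalized-fidelity}, fix one purification of the middle state, apply Uhlmann's theorem twice relative to it, and invoke the triangle inequality for the trace distance on the three pure states. The subtlety you flag — that both applications of Uhlmann must share the \emph{same} purification of the middle state — is precisely the point the paper's proof also relies on.
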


\begin{proof}
  Let $\rho, \tau$ and $\sigma$ be any states in $\subnormstates{\h}$.
  The condition $P(\rho, \tau) = 0 \iff \rho = \tau$ can be verified
  by inspection, and symmetry $P(\rho, \tau) = P(\tau, \rho)$ follows
  from the symmetry of the fidelity.

  It remains to show the triangle inequality $P(\rho, \tau) \leq
  P(\rho, \sigma) + P(\sigma, \tau)$. Using
  Lemma~\ref{lemma:alt-generalized-fidelity}, the
  generalized fidelities between $\rho$, $\tau$ and $\sigma$ can be
  expressed as fidelities between the corresponding extensions
  $\rhoh$, $\tauh$ and $\sigmah$. Furthermore, we use
  Uhlmann's theorem to introduce purifications $\ket{r}$ of $\rhoh$,
  $\ket{s}$ of $\sigmah$ and $\ket{t}$ of $\tauh$ such that $F(\rhoh,
  \sigmah) = \abs{\braket{r}{s}}$, $F(\sigmah, \tauh) =
  \abs{\braket{s}{t}}$ and $F(\rhoh, \tauh) \geq \abs{\braket{r}{t}}$.
  Hence,
  \begin{align*}
    P(\rho, \sigma) + P(\sigma, \tau) &= P(r, s) + P(s, t) \\
    & = D(r, s) + D(s, t) \\
    &\geq D(r, t) = P(r, t) \geq P(\rho, \tau) \, ,
  \end{align*}
  where we have used the triangle inequality for the trace
  distance.
\end{proof}

The following lemma gives lower and upper bounds to the purified distance
in terms of the generalized trace distance.
\begin{lemma}
  \label{lemma:metric-bounds}
  Let $\rho, \tau \in \subnormstates{\h}$. Then $$\dist(\rho, \tau) \leq
  P(\rho, \tau) \leq \sqrt{2 \dist(\rho, \tau)} \, .$$
\end{lemma}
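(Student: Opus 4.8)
The plan is to reduce the claim to the Fuchs--van de Graaf inequalities for \emph{normalized} states, applied to the extensions $\rhoh$ and $\tauh$ from Lemma~\ref{lemma:alt-generalized-fidelity}. The starting observation is that the purified distance is already a sine distance between these extensions: since $\rhoh, \tauh \in \normstates{\hat{\h}}$ with $\hat{\h} = \h \oplus \mathbb{C}$, Lemma~\ref{lemma:alt-generalized-fidelity} gives $\bar{F}(\rho,\tau) = F(\rhoh, \tauh)$, and hence $P(\rho,\tau) = \sqrt{1 - F(\rhoh,\tauh)^2}$ directly from the definition of $P$.

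The second ingredient is the analogous statement for the generalized trace distance, namely that $\dist(\rho,\tau) = D(\rhoh,\tauh)$. To see this I would compute $\rhoh - \tauh = (\rho-\tau)\oplus(\tr\,\tau-\tr\,\rho)$, so that the Schatten $1$-norm splits over the direct sum as $\norm{\rhoh-\tauh}{1} = \norm{\rho-\tau}{1} + \abs{\tr\,\rho - \tr\,\tau}$. Since $\rhoh$ and $\tauh$ are normalized, $D(\rhoh,\tauh) = \tfrac12 \norm{\rhoh-\tauh}{1}$, which is exactly the alternative expression $\tfrac12\norm{\rho-\tau}{1} + \tfrac12\abs{\tr\,\rho-\tr\,\tau}$ for $\dist(\rho,\tau)$ recorded just after the definition of the generalized trace distance.

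With both quantities lifted to the normalized pair $(\rhoh, \tauh)$, the claim becomes the Fuchs--van de Graaf inequalities $1 - F(\rhoh,\tauh) \leq D(\rhoh,\tauh) \leq \sqrt{1 - F(\rhoh,\tauh)^2}$ (see~\cite{nielsen00}). The upper bound there reads $D(\rhoh,\tauh) \leq \sqrt{1-F(\rhoh,\tauh)^2} = P(\rho,\tau)$, i.e.\ $\dist(\rho,\tau) \leq P(\rho,\tau)$. For the remaining inequality I would factor $P(\rho,\tau)^2 = 1 - F(\rhoh,\tauh)^2 = (1 - F(\rhoh,\tauh))(1 + F(\rhoh,\tauh)) \leq 2(1 - F(\rhoh,\tauh)) \leq 2 D(\rhoh,\tauh) = 2\dist(\rho,\tau)$, using $F(\rhoh,\tauh) \leq 1$ and then the lower Fuchs--van de Graaf bound; taking square roots yields $P(\rho,\tau) \leq \sqrt{2\dist(\rho,\tau)}$.

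The only step requiring genuine care is the trace-distance identity $\dist(\rho,\tau) = D(\rhoh,\tauh)$: one must check that the positive part of $\rhoh - \tauh$ decomposes cleanly across the block structure and that the one-dimensional ancilla block contributes precisely the trace defect $\abs{\tr\,\rho - \tr\,\tau}$. Everything else is the textbook normalized-state argument, so once the two identities $P(\rho,\tau) = \sqrt{1-F(\rhoh,\tauh)^2}$ and $\dist(\rho,\tau) = D(\rhoh,\tauh)$ are in place, both bounds follow immediately.
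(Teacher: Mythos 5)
Your proposal is correct and follows essentially the same route as the paper: lift both quantities to the normalized extensions $\rhoh$, $\tauh$ via $P(\rho,\tau)=\sqrt{1-F(\rhoh,\tauh)^2}$ and $\dist(\rho,\tau)=D(\rhoh,\tauh)$, then apply the Fuchs--van de Graaf inequalities. Your explicit verification of the block decomposition for $\dist(\rho,\tau)=D(\rhoh,\tauh)$ is a detail the paper leaves implicit, and your factoring $(1-F)(1+F)\leq 2(1-F)$ is only a cosmetic variant of the paper's $1-(1-D)^2\leq 2D$ step.
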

\begin{proof}
  We express the quantities using the normalized extensions $\rhoh$
  and $\tauh$ of Lemma~\ref{lemma:alt-generalized-fidelity} to get
  \begin{align*}
    &P(\rho, \tau) = \sqrt{1 - F(\rhoh, \tauh)^2} \geq D(\rhoh, \tauh) = \dist(\rho, \tau)\, \quad  \textrm{and} \\
    &P(\rho, \tau)^2 = 1 - F(\rhoh, \tauh)^2\leq 1 - \big(1 - D(\rhoh, \tauh) \big)^2 \leq 2 \dist(\rho, \tau) \, ,
  \end{align*}
  where we have made use of $1 - F(\rhoh, \tauh) \leq D(\rhoh, \tauh)
  \leq \sqrt{1 - F(\rhoh, \tauh)^2}$~(see e.g.\ \cite{nielsen00},
  Section 9.2.3).
\end{proof}

A useful property of the purified distance is that it does not increase
under simultaneous application of a quantum operation on both states.
We consider the class of trace non-increasing CPMs, which includes projections.
\begin{lemma}
\label{lemma:purified-monotonicity}
Let $\rho, \tau \in \subnormstates{\h}$ and $\mathcal{E}$ be a trace
non-increasing CPM. Then,
$P(\rho, \tau) \geq P\big(
\mathcal{E}(\rho), \mathcal{E}(\tau) \big)$.
\end{lemma}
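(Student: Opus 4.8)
The plan is to reduce the statement to a monotonicity property of the generalized fidelity. Since $P(\rho,\tau) = \sqrt{1 - \bar{F}(\rho,\tau)^2}$ is a strictly decreasing function of $\bar{F}(\rho,\tau) \in [0,1]$, the inequality $P(\rho,\tau) \geq P(\mathcal{E}(\rho),\mathcal{E}(\tau))$ is equivalent to
\[
  \bar{F}\big(\mathcal{E}(\rho),\mathcal{E}(\tau)\big)\ \geq\ \bar{F}(\rho,\tau)\,,
\]
so the entire task is to show that $\bar{F}$ does not decrease under a trace non-increasing CPM. I would assume throughout that the outputs $\mathcal{E}(\rho),\mathcal{E}(\tau)$ are again sub-normalized states; the degenerate case of a vanishing output trace can be excluded or handled separately.

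First I would complete $\mathcal{E}$ to a trace-preserving map by appending a one-dimensional system that records the weight lost under $\mathcal{E}$. Because $\mathcal{E}$ is trace non-increasing its adjoint obeys $\mathcal{E}^\dagger(\id) \leq \id$, hence $\id - \mathcal{E}^\dagger(\id) \geq 0$ and $X \mapsto \tr\big[(\id - \mathcal{E}^\dagger(\id))X\big]$ is completely positive. Consequently
\[
  \mathcal{E}'(X)\ :=\ \mathcal{E}(X) \oplus \tr\big[(\id - \mathcal{E}^\dagger(\id))X\big]
\]
is a TP-CPM. Recalling from Lemma~\ref{lemma:alt-generalized-fidelity} that $\bar{F}(\rho,\tau) = F(\rhoh,\tauh)$ for the normalized extensions $\rhoh = \rho \oplus (1-\tr\,\rho)$ and $\tauh = \tau \oplus (1-\tr\,\tau)$, I would then let $\mathcal{E}'$ act on the original block of these extensions while acting as the identity on the appended block. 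This produces a TP-CPM $\bar{\mathcal{E}}$ with
\[
  \bar{\mathcal{E}}(\rhoh)\ =\ \mathcal{E}(\rho) \oplus (\tr\,\rho - \tr\,\mathcal{E}(\rho)) \oplus (1-\tr\,\rho)\,,
\]
a normalized state whose compression onto the output space of $\mathcal{E}$ is exactly $\mathcal{E}(\rho)$, and analogously for $\tauh$.

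The final step chains two facts. By monotonicity of the ordinary fidelity under the TP-CPM $\bar{\mathcal{E}}$ (cf.\ Theorem~9.6 of \cite{nielsen00}), $F(\rhoh,\tauh) \leq F\big(\bar{\mathcal{E}}(\rhoh), \bar{\mathcal{E}}(\tauh)\big)$. On the other hand, $\bar{\mathcal{E}}(\rhoh)$ and $\bar{\mathcal{E}}(\tauh)$ are normalized extensions of $\mathcal{E}(\rho)$ and $\mathcal{E}(\tau)$ of precisely the form admitted in the supremum of Definition~\ref{def:generalized-fidelity}, whence $F\big(\bar{\mathcal{E}}(\rhoh), \bar{\mathcal{E}}(\tauh)\big) \leq \bar{F}\big(\mathcal{E}(\rho),\mathcal{E}(\tau)\big)$. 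Combined with $\bar{F}(\rho,\tau) = F(\rhoh,\tauh)$ this yields $\bar{F}(\rho,\tau) \leq \bar{F}(\mathcal{E}(\rho),\mathcal{E}(\tau))$, and the claim follows.

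The conceptual content is short; I expect the main obstacle to be the bookkeeping of the direct-sum constructions\,|\,verifying that $\mathcal{E}'$ is genuinely completely positive and trace-preserving, that the appended blocks carry nonnegative weight (which is exactly where trace non-increasingness enters), and, most importantly, that $\bar{\mathcal{E}}(\rhoh)$ is a legitimate candidate in the supremum defining $\bar{F}(\mathcal{E}(\rho),\mathcal{E}(\tau))$, i.e.\ that compressing it onto the output Hilbert space returns exactly $\mathcal{E}(\rho)$.
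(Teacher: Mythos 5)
Your proof is correct, but it takes a genuinely different route from the paper's. The paper decomposes the trace non-increasing CPM into an isometry followed by a projection and a partial trace, and then proves monotonicity of $\bar{F}$ separately for the trace-preserving part (using the closed form $\bar{F}(\rho,\tau) = F(\rho,\tau) + \sqrt{(1-\tr\,\rho)(1-\tr\,\tau)}$ of Lemma~\ref{lemma:alt-generalized-fidelity}, under which the correction term is untouched) and for the projection (using the supremum in Definition~\ref{def:generalized-fidelity} directly, since every extension of $\{\h,\rho,\tau\}$ is an extension of the projected triple). You instead avoid the case split entirely: you dilate $\mathcal{E}$ to a TP-CPM by appending a flag block carrying the lost weight $\tr\big[(\id-\mathcal{E}^\dagger(\id))X\big]$, apply ordinary fidelity monotonicity once to the normalized extensions $\rhoh,\tauh$, and then observe that the images are admissible normalized extensions in the supremum defining $\bar{F}\big(\mathcal{E}(\rho),\mathcal{E}(\tau)\big)$. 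This is arguably more streamlined\,|\,a single invocation of fidelity monotonicity handles the general trace non-increasing case\,|\,at the price of the direct-sum bookkeeping you already flag. Two loose ends to tighten, neither fatal: (a) ``acting as the identity on the appended block'' must be promoted to a bona fide CPM on all of $\h\oplus\mathbb{C}$, e.g.\ $Y\mapsto \mathcal{E}'(\Pi Y\Pi)\oplus\tr(\Pi_\perp Y\Pi_\perp)$, which suffices because $\rhoh$ and $\tauh$ are block diagonal; and (b) the degenerate case $\tr\,\mathcal{E}(\rho)=0$, where the output leaves $\subnormstates{\h'}$, should indeed be excluded as you note.
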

\begin{proof}
  Note that a trace non-increasing CPM $\mathcal{E} : \posops{\h}
  \to \posops{\h'}$ can be decomposed into an isometry $U: \h \to \h'
  \kron \h''$ followed by a projection $\Pi \in \posops{\h' \kron
    \h''}$ and a partial trace over $\h''$ (see, e.g.\
  \cite{nielsen00}, Section~8.2).  The isometry and the partial trace
  are TP-CPMs and, hence, it suffices to show that $\bar{F}(\rho,
  \tau) \leq \bar{F}\big(\mathcal{E}(\rho), \mathcal{E}(\tau) \big)$ for TP-CPMs
  and projections.

  First, let $\mathcal{E}$ be trace preserving.  Using
  Lemma~\ref{lemma:alt-generalized-fidelity} and the
  monotonicity under TP-CPMs of the fidelity, we see that
  \begin{align*}
    \bar{F}(\rho, \tau) &= F(\rho, \tau) +
    \sqrt{(1-\tr\,\rho)(1-\tr\,\tau)} \\
    &\leq F\big(\mathcal{E}(\rho), \mathcal{E}(\tau)\big) +
    \sqrt{(1-\tr\,\rho)(1-\tr\,\tau)} \\
    &= \bar{F}\big(\mathcal{E}(\rho),
    \mathcal{E}(\tau) \big)\, .
  \end{align*}

  Next, consider a projection $\Pi \in \posops{\h}$ and the CPM
  $\mathcal{E}: \rho \mapsto \Pi\rho\Pi$. Following
  Definition~\ref{def:generalized-fidelity}, we write $\bar{F}(\rho,
  \tau) = \sup\, F(\rhob, \taub)$, where the supremum is taken over
  all extensions $\{ \bar{\h}, \rhob, \taub \}$ of $\{ \h, \rho, \tau
  \}$. Since all extensions of $\{ \h, \rho, \tau \}$ are also
  extensions of $\big\{ \supp{\Pi}, \Pi \rho\Pi, \Pi\tau\Pi \big\}$,
  we find $\bar{F}\big(\Pi\rho\Pi, \Pi\tau\Pi \big) \geq \bar{F}(\rho,
  \tau)$.
\end{proof}

The main advantage of the purified distance over the trace distance is that we can always find extensions and purifications without increasing the distance.
\begin{lemma}
\label{lemma:purified-Uhlmann}
Let $\rho, \tau \in \subnormstates{\h}$, $\h' \iso \h$ and $\varphi
\in \h \kron \h'$ be a purification of $\rho$. Then, there exists a purification $\vartheta \in \h \kron \h'$ of $\tau$ with $P(\rho, \tau) = P(\varphi, \vartheta)$.
\end{lemma}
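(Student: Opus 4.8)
The plan is to produce the purification $\vartheta$ directly from Uhlmann's theorem and then verify that the two purified distances agree by computing the generalized fidelity through Lemma~\ref{lemma:alt-generalized-fidelity}. First I would record two elementary observations. Since the partial trace preserves the total trace, $\tr\,\varphi = \tr\,\rho$; and any purification $\vartheta \in \h \kron \h'$ of $\tau$ automatically satisfies $\tr\,\vartheta = \tr\,\tau$. Thus $\varphi$ and every candidate $\vartheta$ are \emph{pure} sub-normalized states whose ``missing'' weights $1-\tr\,\varphi$ and $1-\tr\,\vartheta$ already match those of $\rho$ and $\tau$.

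The central step is to apply Uhlmann's theorem to the given purification $\varphi$ of $\rho$. Because $\h' \iso \h$, the purifying system has dimension $\dim \h \geq \max\{\rank\,\rho, \rank\,\tau\}$, so it is large enough both to carry $\varphi$ and to attain the full fidelity. Uhlmann's theorem then supplies a purification $\vartheta \in \h \kron \h'$ of $\tau$ with $F(\varphi, \vartheta) = \abs{\braket{\varphi}{\vartheta}} = F(\rho, \tau)$.

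With this $\vartheta$ fixed, I would close the argument using Lemma~\ref{lemma:alt-generalized-fidelity}. Applied to the pure states $\varphi, \vartheta$ it gives $\bar{F}(\varphi, \vartheta) = F(\varphi, \vartheta) + \sqrt{(1-\tr\,\varphi)(1-\tr\,\vartheta)}$, and applied to $\rho, \tau$ it gives $\bar{F}(\rho, \tau) = F(\rho, \tau) + \sqrt{(1-\tr\,\rho)(1-\tr\,\tau)}$. The square-root terms coincide because $\tr\,\varphi = \tr\,\rho$ and $\tr\,\vartheta = \tr\,\tau$, and the fidelity terms coincide by the choice of $\vartheta$; hence $\bar{F}(\varphi, \vartheta) = \bar{F}(\rho, \tau)$ and therefore $P(\varphi, \vartheta) = \sqrt{1 - \bar{F}(\varphi, \vartheta)^2} = \sqrt{1 - \bar{F}(\rho, \tau)^2} = P(\rho, \tau)$.

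The step I expect to require the most care is the invocation of Uhlmann's theorem in the sub-normalized setting, together with the accompanying dimension bookkeeping: one must check that a purification of $\tau$ realizing the fidelity exists inside the prescribed space $\h \kron \h'$ rather than only in some larger space, and that $F(\varphi, \vartheta) = \abs{\braket{\varphi}{\vartheta}}$ still holds for sub-normalized pure vectors (which follows by factoring out $\sqrt{\tr\,\rho}$ and $\sqrt{\tr\,\tau}$ to reduce to the normalized case). As a consistency check, one inequality is free: since the partial trace is a TP-CPM, Lemma~\ref{lemma:purified-monotonicity} already yields $P(\varphi, \vartheta) \geq P(\rho, \tau)$ for every purification $\vartheta$ of $\tau$, so only the reverse bound genuinely relies on Uhlmann's theorem.
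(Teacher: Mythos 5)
Your proposal is correct and follows the same route as the paper, which simply invokes Uhlmann's theorem to choose $\vartheta \in \h \kron \h'$ with $F(\rho,\tau) = F(\varphi,\vartheta)$ and concludes $P(\rho,\tau) = P(\varphi,\vartheta)$. The only difference is that you explicitly verify, via Lemma~\ref{lemma:alt-generalized-fidelity} and the trace-matching $\tr\,\varphi = \tr\,\rho$, $\tr\,\vartheta = \tr\,\tau$, that equality of the ordinary fidelities lifts to equality of the \emph{generalized} fidelities $\bar{F}$ --- a detail the paper's one-line proof leaves implicit but which is needed since $P$ is defined through $\bar{F}$.
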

\begin{proof}
We use Uhlmann's theorem to choose $\vartheta \in \h \kron \h'$ such that $F(\rho, \tau) = F(\varphi, \vartheta)$ and, thus, $P(\rho, \tau) = P(\varphi, \vartheta)$.
\end{proof}

\begin{corollary}
\label{cor:purified-ext}
Let $\rho, \tau \in \subnormstates{\h}$ and $\rhob \in
\subnormstates{\h \kron \h'}$ be an extension of $\rho$. Then, there
exists an extension $\taub \in \subnormstates{\h \kron \h'}$ of $\tau$
with $P(\rho, \tau) = P(\rhob, \taub)$.
\end{corollary}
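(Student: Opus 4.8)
The plan is to deduce the corollary from the purification statement of Lemma~\ref{lemma:purified-Uhlmann} (i.e.\ from Uhlmann's theorem) by passing to a common purifying space and then discarding the purifying system. Before constructing anything, I would record the trivial half of the equality, which holds for \emph{any} extension $\taub$ of $\tau$ on $\h \kron \h'$: the partial trace $\tr_{\h'}$ is a TP-CPM and sends $\rhob \mapsto \rho$ and $\taub \mapsto \tau$, so Lemma~\ref{lemma:purified-monotonicity} gives $P(\rhob, \taub) \geq P(\rho, \tau)$. Hence the entire content of the corollary is to exhibit one particular extension $\taub$ of $\tau$ for which the reverse inequality $P(\rhob, \taub) \leq P(\rho, \tau)$ also holds.

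To produce such a $\taub$ I would first purify the given extension. Choose a purifying system $\h''$ of sufficiently large dimension (large enough to carry a purification of $\rhob$, and such that $\h' \kron \h''$ can carry a purification of $\tau$), and let $\varphi \in \h \kron \h' \kron \h''$ be a purification of $\rhob$. Since tracing out $\h' \kron \h''$ from $\varphi$ yields $\tr_{\h'}(\rhob) = \rho$, the state $\varphi$ is simultaneously a purification of $\rho$, now with $\h' \kron \h''$ as the purifying system. I then apply Uhlmann's theorem exactly as in the proof of Lemma~\ref{lemma:purified-Uhlmann}\,|\,the argument there uses only that the purifying space is large enough to purify $\tau$, not that it is isomorphic to $\h$\,|\,to obtain a purification $\vartheta \in \h \kron \h' \kron \h''$ of $\tau$ with $F(\rho, \tau) = F(\varphi, \vartheta)$, and therefore $P(\varphi, \vartheta) = P(\rho, \tau)$.

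Finally I would set $\taub := \tr_{\h''}(\vartheta)$. Tracing $\h''$ out of the purification $\vartheta$ of $\tau$ gives $\tr_{\h'}(\taub) = \tr_{\h' \h''}(\vartheta) = \tau$, so $\taub$ is an extension of $\tau$ on $\h \kron \h'$, and $\tr\,\taub = \tr\,\tau > 0$ places it in $\subnormstates{\h \kron \h'}$. Because the partial trace over $\h''$ is a TP-CPM mapping $\varphi \mapsto \rhob$ and $\vartheta \mapsto \taub$, Lemma~\ref{lemma:purified-monotonicity} yields $P(\rhob, \taub) \leq P(\varphi, \vartheta) = P(\rho, \tau)$. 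Combined with the lower bound from the first step, this gives $P(\rhob, \taub) = P(\rho, \tau)$, as claimed.

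I expect the only genuinely delicate point to be the invocation of Uhlmann's theorem on the enlarged space: one must verify that $\varphi$ really qualifies as a purification of $\rho$ relative to the purifying system $\h' \kron \h''$, so that the Uhlmann partner $\vartheta$ lives on the \emph{same} space $\h \kron \h' \kron \h''$ and reduces correctly, and that the generalized fidelity $\bar{F}$ coincides with the ordinary fidelity $F$ at the relevant step because $\varphi$ and $\vartheta$ are pure with $\tr\,\varphi = \tr\,\rho$ and $\tr\,\vartheta = \tr\,\tau$ (this is the identity $P(\varphi, \vartheta) = P(\rho, \tau)$ used above). Everything else is routine monotonicity bookkeeping with the partial trace.
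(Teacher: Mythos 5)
Your proposal is correct and follows essentially the same route as the paper's own proof: purify the extension $\rhob$ on an auxiliary space, invoke Uhlmann's theorem (Lemma~\ref{lemma:purified-Uhlmann}) to obtain a matching purification $\vartheta$ of $\tau$ with $P(\varphi,\vartheta)=P(\rho,\tau)$, define $\taub$ by tracing out the auxiliary system, and sandwich $P(\rhob,\taub)$ between $P(\rho,\tau)$ and $P(\varphi,\vartheta)$ via the monotonicity of Lemma~\ref{lemma:purified-monotonicity}. Your remark that Uhlmann's theorem only needs the purifying space to be large enough (not isomorphic to $\h$) correctly identifies the one point where the paper applies its lemma slightly beyond its literal statement.
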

\begin{proof}
  Let $\h'' \iso \h \kron \h'$ be an auxiliary Hilbert space and
  $\varphi \in \h \kron \h' \kron \h''$ be a purification of
  $\rhob$. We introduce a purification $\vartheta \in \h \kron \h'
  \kron \h''$ of $\tau$ with $P(\varphi, \vartheta) = P(\rho, \tau)$
  using Lemma~\ref{lemma:purified-Uhlmann} and $\taub =
  \ptr{\h''}\,\vartheta$. However, due to
  Lemma~\ref{lemma:purified-monotonicity}, we have $P(\varphi,
  \vartheta) \geq P(\rhob, \taub) \geq P(\rho, \tau)$, which implies
  that all three distances must be equal.
\end{proof}

\section{The $\eps$-neighborhood induced by $P$}
\label{sec:eps-ball}

The $\eps$-smooth min-entropy of a state $\rho$ is usually defined
as a maximization of the min-entropy over a set of states that are
$\eps$-close to $\rho$. Various definitions of such
sets\,|\,subsequently called $\eps$-balls\,|\,have appeared in the
literature. None of the existhing definitions exhibit the
following two properties that are of particular importance in the
context of smooth conditional min- and max-entropies: Firstly, the
smooth entropies should be independent of the Hilbert spaces used to
represent the state. In particular, embedding the density operator
into a larger Hilbert space should leave the smooth entropies
unchanged. This can be achieved by allowing sub-normalized states in
the $\eps$-balls. Secondly, it will be important that we can define
a ball of pure states that contains purifications of all the states
in the $\eps$-ball. This allows us to establish the duality relation
between smooth min- and max-entropies and is achieved by using a
fidelity-based metric to determine $\eps$-closeness. The following
ball possesses both of the above properties:
\begin{definition}
  \label{def:eps-ball}
  Let $\eps \geq 0$ and $\rho \in \subnormstates{\h}$ with
  $\sqrt{\tr\,\rho} > \eps$. Then, we define
  an \emph{$\eps$-ball} in $\h$ around $\rho$ as
  \begin{equation*}
    \label{eqn:def-eps-ball}
    \epsball{\eps}{\h ; \rho} := \{ \tau \in \subnormstates{\h} :
    P(\tau, \rho) \leq \eps \} \, .
  \end{equation*}
  We also define $\epsballpure{\eps}{\h; \rho} := \{ \tau \in
  \epsball{\eps}{\h; \rho} : \rank\, \tau = 1 \}$, i.e.\ an $\eps$-ball
  of pure states around $\rho$.
\end{definition}

We now prove some properties of the $\eps$-ball that will be important
for our later discussion of smooth conditional min- and
max-entropies. Properties~i)--iv) clarify what we mean by an
$\eps$-ball around $\rho$. Property~v) ensures that states in the ball
remain in the ball after applying an isometry, while Properties~vi)
and vii) relate to how the $\eps$-balls change under partial trace and
purification.  These will be particularly relevant for the duality
relation between the smooth min- and max-entropies.

\begin{enumerate}


\item[i)] The set $\epsball{\eps}{\h; \rho}$ is compact and convex.

\begin{proof}
  The set is closed and bounded, hence compact.
  For convexity, we require that, for any $\lambda \in [0, 1]$
  and $\sigma, \tau \in \epsball{\eps}{\h; \rho}$, the state
  $\omega := \lambda \sigma + (1 - \lambda) \tau$ is also in
  $\epsball{\eps}{\h; \rho}$.
  We define $\omegah = \omega \oplus (1\!-\!\tr\,\omega)$ and
  analogously $\rhoh$, $\sigmah$ and $\tauh$. By assumption we have
  $F(\sigmah, \rhoh) \geq \sqrt{1 - \eps^2}$ and $F(\tauh, \rhoh) \geq
  \sqrt{1 - \eps^2}$.  We use the concavity of the fidelity~(cf.\
  \cite{nielsen00}, Section 9.2.2) to find
\begin{align*}
  P(\omega, \rho) &= \sqrt{1 - F(\omegah, \rhoh)^2} \\
  &= \sqrt{1 - F(\lambda \sigmah + (1\!-\!\lambda) \tauh, \rhoh)^2}  \\
  &\leq \sqrt{1 - \big( \lambda F(\sigmah, \rhoh) + (1\!-\!\lambda) F(\tauh, \rhoh) \big)^2}
\leq \eps \, .
\end{align*}
Therefore, $\omega \in \epsball{\eps}{\h; \rho}$, as required.
\end{proof}


\item[ii)] Normalized states in $\epsball{\eps}{\h; \rho}$ are not
  distinguishable from $\rho$ with probability more than
  $\frac{1}{2}(1 + \eps)$.
\begin{proof}
  By Lemma~\ref{lemma:metric-bounds}, $\tau \in \epsball{\eps}{\h;
    \rho}$ implies $\dist(\tau, \rho) \leq P(\tau, \rho) \leq \eps$. The
  statement then follows from~\eqref{eqn:dist-adv}.
\end{proof}


\item[iii)] The ball grows monotonically in the smoothing parameter $\eps$.
  Furthermore, $\epsball{0}{\h; \rho} = \{ \rho \}$.

\item[iv)] The $\eps$-balls are symmetric and satisfy a triangle
  inequality.  In other words, we have
\begin{align*}
&\tau \in \epsball{\eps}{\h; \rho}\! \iff\! \rho \in
\epsball{\eps}{\h;
  \tau}   \quad \textrm{and} \\
&\tau \in \epsball{\eps}{\h; \rho} \wedge \sigma \in
\epsball{\eps'}{\h; \tau} \!\implies\! \sigma \in \epsball{\eps +
\eps'}{\h; \rho} \, .
\end{align*}

\begin{proof}
These properties follow directly from the fact that $P$ is a metric (cf.~Lemma~\ref{lemma:metric}).
\end{proof}


\item[v)] The $\eps$-balls are invariant under isometries. Let $U: \h
  \to \h'$ be an isometry, then
\begin{equation*}
\label{eqn:ball/iso-forward}
\tau \in \epsball{\eps}{\h; \rho} \implies U \tau\, U^\dagger \in \epsball{\eps}{\h'; U \rho\, U^\dagger}
 \, .
\end{equation*}
Conversely,
if $\Pi$ is the projector onto the image of $U$, then
\begin{equation*}
\label{eqn:ball/iso-backward}
\sigma \in \epsball{\eps}{\h'; U \rho\, U^\dagger} \implies U^\dagger \Pi \sigma \Pi U \in \epsball{\eps}{\h; \rho}\, .
\end{equation*}

\begin{proof}
  This property follows from Lemma~\ref{lemma:purified-monotonicity}
  and the fact that $\rho \mapsto U \rho U^\dagger$ and $\rho \mapsto
  U^\dagger \Pi \rho \Pi U$ are trace non-increasing
  CPMs.\footnote{Note that the state $U^\dagger \Pi \sigma \Pi U$ is
    not necessarily normalized. Definitions of $\eps$-balls that do
    not allow sub-normalized states will not be invariant under
    isometries in the sense proposed here. This property will be used
    to show that the smooth min- and max-entropies are invariant under
    local isometries in Lemmas~\ref{lemma:minepsiso}
    and~\ref{lemma:maxepsiso}.}
\end{proof}


\item[vi)] The $\eps$-balls are monotone under partial trace.  More
  precisely, let $\h'$ be a Hilbert space and $\ptr{\h'}$
  the partial trace over $\h'$, then
\begin{equation*}
\tau \in \epsball{\eps}{\h \kron \h'; \rho} \implies \ptr{\h'}{\tau} \in \epsball{\eps}{\h;\, \ptr{\h'}\,\rho}\, .
\end{equation*}

\begin{proof}
This property is a direct consequence of
Lemma~\ref{lemma:purified-monotonicity} and the fact that the partial trace is a TP-CPM.
\end{proof}


\item[vii)] On a sufficiently large Hilbert space, there exists a purification of the $\eps$-ball in the following sense: Let $\h'$ be a Hilbert space with $\dim \h' \geq \dim \h$ and $\varphi \in \h \kron \h'$, then
\begin{align*}
  &\tau \in \epsball{\eps}{\h;\, \ptr{$\h'$}\,\varphi} \implies \nonumber\\
  & \qquad\exists\, \vartheta \in \epsballpure{\eps}{\h \kron \h';
    \varphi}\ \textrm{s.t.}\ \tau = \ptr{\h'}\,\vartheta\, .
\end{align*}

\begin{proof}
  This property follows from
  Lemma~\ref{lemma:purified-Uhlmann}.\footnote{ This property of the
    $\eps$-ball is due to our use of a fidelity-based metric. In
    particular, it does not hold for an $\eps$-ball based on the trace
    distance, such as $\{ \tau \in \subnormstates{\h} : \dist(\rho,
    \tau) \leq \eps \}$.  We will use this in
    Lemma~\ref{lemma:epsballineq} to show that the duality
    relation for the smooth entropy holds.}
\end{proof}

\end{enumerate}

\section{Smooth Conditional Min- and Max-Entropies}
\label{sec:smooth}

In this section we define smooth min- and max-entropies and discuss
some of their properties that follow from our definition of the
$\eps$-ball. In particular, the smooth entropies defined in the
following can be seen as optimizations of the corresponding non-smooth
entropies over an $\eps$-ball of states (Definition~\ref{def:mineps}
and Lemma~\ref{lemma:maxepsball}). Moreover, they are invariant under
local isometries (Lemmas~\ref{lemma:minepsiso}
and~\ref{lemma:maxepsiso}) and satisfy a duality relation
(Definition~\ref{def:maxeps}).\footnote{For convenience of exposition,
  we will define the smooth max-entropy as the dual of the smooth
  min-entropy (Definition~\ref{def:maxeps}) and then prove that this
  definition is equivalent to an optimization over an $\eps$-ball of
  states of the non-smooth max-entropy (Lemma~\ref{lemma:maxepsball}).}

In the following, we assume that $\eps$ is much smaller than the trace
of all involved states as is predominantly the case in applications. Indices
are used to denote multi-partite Hilbert spaces, e.g.\ $\hAB = \hA
\kron \hB$ and to denote the different marginal states of
multi-partite systems. We often do not mention explicitly when a
partial trace needs to be taken, e.g.\ if $\rhoAB \in
\subnormstates{\hAB}$ is given, then $\rhoA := \ptr{B}\,\rhoAB$ is
also implicitly defined.

We define the min-entropy:
\begin{definition}
  Let $\rhoAB \in \subnormstates{\hAB}$, then the \emph{min-entropy}
  of A conditioned on B of $\rhoAB$ is defined as
  \begin{equation*} \label{eqn:min-entropy-cond}
    \chmin{A}{B}{\rho} := \!\! \max_{\sigmaB \in \normstates{\hB}}
    \! \sup\, \{ \lambda \in \mathbb{R} : 2^{-\lambda}\, \idA
    \kron \sigmaB \geq \rhoAB \} \, .
  \end{equation*}
\end{definition}
We now use the $\eps$-ball $\epsball{\eps}{\h; \rho}$ to define a
smoothed version of the min-entropy\footnote{Note that we drop $\h$
  when the Hilbert space is clear from the indices of the state.}:
\begin{definition}
  \label{def:mineps}
  Let $\eps \geq 0$ and $\rhoAB \in \subnormstates{\hAB}$,
  then the \emph{$\eps$-smooth min-entropy} of A conditioned on B of
  $\rhoAB$ is defined as
  \begin{equation*}
    \chmineps{\eps}{A}{B}{\rho} := \!\! \max_{\rhotAB \in
      \epsball{\eps}{\rhoAB}} \! \chmin{A}{B}{\rhot}\, .
  \end{equation*}
\end{definition}
The quantity is monotonically increasing in $\eps$ due to
Property~iii) in Section~\ref{sec:eps-ball} and we recover the
non-smooth entropy by $\chmineps{0}{A}{B}{\rho} = \chmin{A}{B}{\rho}$.
Continuity of the smooth min-entropy as a function of the state is
shown in Appendix~\ref{app:tech}.

The smooth min-entropy is independent of the Hilbert spaces used to
represent the density operator locally, as the following lemma
shows:
\begin{lemma}
  \label{lemma:minepsiso}
  Let $\eps \geq 0$, $\rhoAB \in \subnormstates{\hAB}$ and $U: \hA \to
  \hC$ and $V: \hB \to \hD$ be two isometries with $\tauCD := (U \kron
  V) \rhoAB (U^\dagger \kron V^\dagger)$, then
  \begin{equation*}
    \chmineps{\eps}{A}{B}{\rho} = \chmineps{\eps}{C}{D}{\tau} \,.
  \end{equation*}
\end{lemma}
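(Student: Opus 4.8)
The plan is to establish the equality by proving two inequalities, each obtained by applying one direction of the isometric invariance of the $\eps$-ball (Property~v) together with the fact that the non-smooth min-entropy $\chmin{A}{B}{\rho}$ is itself invariant under local isometries. I would first treat the non-smooth statement $\chmin{A}{B}{\rho} = \chmin{C}{D}{\tau}$ as a building block: for the min-entropy defined via $2^{-\lambda}\,\idA \kron \sigmaB \geq \rhoAB$, conjugating by $U \kron V$ sends the operator inequality to $2^{-\lambda}\,(U\idA U^\dagger) \kron (V\sigmaB V^\dagger) \geq \tauCD$, and since $U\idA U^\dagger = \PiC$ (the projector onto $\im{U}$) rather than $\idC$, one must check that replacing $\PiC$ by $\idC$ on the left only relaxes the inequality, so the same $\lambda$ is feasible for $\tauCD$ with the marginal $V\sigmaB V^\dagger \in \normstates{\hD}$. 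I expect this marginal bookkeeping\,|\,ensuring the optimization over $\sigmaB \in \normstates{\hB}$ maps correctly onto the optimization over $\sigma_{\textrm{D}} \in \normstates{\hD}$\,|\,to be the most delicate routine point.

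For the forward inequality $\chmineps{\eps}{A}{B}{\rho} \leq \chmineps{\eps}{C}{D}{\tau}$, I would take an optimal $\rhotAB \in \epsball{\eps}{\rhoAB}$ achieving the maximum in Definition~\ref{def:mineps}. By the forward part of Property~v) applied with the isometry $U \kron V$, the state $(U \kron V)\,\rhot\,(U^\dagger \kron V^\dagger)$ lies in $\epsball{\eps}{\hCD; \tauCD}$, and by the non-smooth invariance its min-entropy equals $\chmin{A}{B}{\rhot} = \chmineps{\eps}{A}{B}{\rho}$. Since this state is a candidate in the maximization defining $\chmineps{\eps}{C}{D}{\tau}$, the inequality follows.

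For the reverse inequality $\chmineps{\eps}{C}{D}{\tau} \leq \chmineps{\eps}{A}{B}{\rho}$, I would take an optimal $\sigma_{\textrm{CD}} \in \epsball{\eps}{\tauCD}$ and pull it back using the converse part of Property~v). Writing $\Pi = (U \kron V)(U^\dagger \kron V^\dagger)$ for the projector onto the image of the combined isometry, Property~v) gives $(U^\dagger \kron V^\dagger)\,\Pi\,\sigma_{\textrm{CD}}\,\Pi\,(U \kron V) \in \epsball{\eps}{\hAB; \rhoAB}$; call this pulled-back state $\rhot$. The non-smooth invariance then yields $\chmin{A}{B}{\rhot} \geq \chmin{C}{D}{\sigma}$\,|\,here one uses that conjugating back by the coisometry and inserting $\Pi$ cannot decrease the min-entropy, which again reduces to the operator-inequality manipulation above\,|\,so $\rhot$ is a candidate witnessing $\chmineps{\eps}{A}{B}{\rho} \geq \chmineps{\eps}{C}{D}{\tau}$.

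The hard part will be the interplay between the non-normalization introduced by the projection $\Pi$ in the converse direction and the behaviour of the min-entropy under that projection; one must verify carefully that passing through the (generally trace-decreasing) map $\sigma \mapsto (U^\dagger \kron V^\dagger)\,\Pi\,\sigma\,\Pi\,(U \kron V)$ preserves the feasibility of the optimal $\lambda$, rather than assuming it. Once the non-smooth invariance $\chmin{A}{B}{\rho} = \chmin{C}{D}{\tau}$ is cleanly established as a lemma handling both the isometry and its adjoint-with-projection, both inequalities are immediate from Property~v), and combining them gives the claimed equality.
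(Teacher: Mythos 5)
Your proposal is correct and follows essentially the same route as the paper: both inequalities are obtained via Property~v), pushing forward with $U \kron V$ and pulling back with the adjoint sandwiched by the projector onto the image, together with the operator-inequality manipulation $U \idA U^\dagger \leq \idC$. The only difference is that the paper first rewrites the smooth min-entropy as an optimization over unnormalized $\sigmaB \in \posops{\hB}$ with objective $-\log \tr\,\sigmaB$, which turns the renormalization issue you flag as delicate into the one-line observation that $\tr (\PiV \omega) \leq \tr\,\omega$, so the pullback can only improve the objective.
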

\begin{proof}
  First note that the $\eps$-smooth min-entropy $\chmineps{\eps}{A}{B}{}$ can be written as
  \begin{equation}
  \label{eqn:mineps-alt}
  \chmineps{\eps}{A}{B}{\rho}\ = \!\! \max_{\rhotAB\,\in\,\epsball{\eps}{\rhoAB}} \!
  \mathop{\max_{\sigmaB\,\in\,\posops{\hB}}}_{\rhotAB\,\leq\,\idA
    \kron \sigmaB} -\log \tr\,\sigmaB \, ,
  \end{equation}
  where $\log$ denotes the binary logarithm. Now, we let $\rhobAB \in
  \epsball{\eps}{\rhoAB}$ and $\sigmabB \in \posops{\hB}$ be the pair
  of states that maximize this expression,
  i.e.~$\chmineps{\eps}{A}{B}{\rho} = - \log \tr\,\sigmabB$. Then
  $\rhobAB \leq \idA \kron \sigmabB$ implies
  $$ \underbrace{\, (U \kron V) \rhobAB (U^\dagger \kron V^\dagger) \,
    }_{\, =: \,\taubCD} \leq U U^\dagger \kron V \sigmabB V^\dagger \leq
    \idC \kron \underbrace{\, V \sigmabB V^\dagger\, }_{\, =:\,\omegabD} \, .$$
  The $\eps$-ball is invariant under isometries (cf.~Property~v)) and,
  therefore, the pair $\taubCD \in \epsball{\eps}{\tauCD}$ and
  $\omegabD \in \posops{\hD}$ is a candidate for the optimization in
  $\chmineps{\eps}{C}{D}{\tau}$. We bound
  \begin{align*}
    \chmineps{\eps}{C}{D}{\tau} \geq - \log \tr\,\omegabD
    = - \log \tr\,\sigmabB
    = \chmineps{\eps}{A}{B}{\rho} \, .
  \end{align*}

  The argument in the reverse direction is similar. Let $\tautCD
  \in\epsball{\eps}{\tauCD}$ and $\omegatD \in \posops{\hD}$ be the
  pair that maximizes $\chmineps{\eps}{C}{D}{\tau}$. Moreover, we
  introduce $\PiUV = \PiU \kron\PiV$, where $\PiU$ and $\PiV$ are the
  projectors onto the image of $U$ and $V$ respectively. Then $\tautCD
  \leq \idC \kron \omegatD$ implies
  $$ \underbrace{\, (U^\dagger \kron V^\dagger) \PiUV \tautCD \PiUV
    (U \kron V)\, }_{\, =:\ \rhotAB \, } \leq \idA \kron
    \underbrace{\, V^\dagger \PiV \omegatD \PiV V \, }_{\, =:\
      \sigmatB }\, .$$
  The pair $\rhotAB \in \epsball{\eps}{\rhoAB}$ and $\sigmatB \in
  \mathcal{P}(\hB)$ is a candidate for the optimization in
  $\chmineps{\eps}{A}{B}{\rho}$ (cf.~Property~v)) and we get
  \begin{align*}
    \chmineps{\eps}{A}{B}{\rho} &\geq - \log
    \tr\,\sigmatB = - \log \Trace{\idx{\Pi}{V} \omegatD} \\
    &\geq - \log \tr\,\omegatD = \chmineps{\eps}{C}{D}{\tau} \, .
  \end{align*}
  We thus conclude that $\chmineps{\eps}{A}{B}{\rho} =
  \chmineps{\eps}{C}{D}{\tau}$.
\end{proof}

We next define the dual of the smooth min-entropy, the smooth
max-entropy:
\begin{definition}
  \label{def:maxeps}
  Let $\eps \geq 0$,
  $\rhoAB \in \subnormstates{\hAB}$ and $\rhoABC \in
  \subnormstates{\hABC}$ an arbitrary purification of $\rhoAB$, then
  the \emph{$\eps$-smooth max-entropy} of A conditioned on B of $\rhoAB$ is defined as
  \begin{equation}
    \label{eqn:duality}
    \chmaxeps{\eps}{A}{B}{\rho} := -\chmineps{\eps}{A}{C}{\rho}\, .
  \end{equation}
\end{definition}
The quantity is well-defined since all purifications of $\rhoAB$ are
equivalent up to an isometry on the purifying space $\hC$, which
does not change $\chmineps{\eps}{A}{C}{}$ as
Lemma~\ref{lemma:minepsiso} shows. The non-smooth max-entropy is
given by $\chmax{A}{B}{} := \chmaxeps{0}{A}{B}{}$. An alternative
expression for the max-entropy was given in~\cite{koenig08}:
\begin{equation}
  \label{eqn:max-alt}
  \chmax{A}{B}{\rho} := \max_{\sigmaB \in \normstates{\hB}} \log F
  \big( \rhoAB, \idA \kron \sigmaB \big)^2 \, .
\end{equation}

The smooth max-entropy is independent of the Hilbert spaces used to
represent the density operator locally:
\begin{lemma}
  \label{lemma:maxepsiso}
  Let $\eps \geq 0$, $\rhoAB \in \subnormstates{\hAB}$, and $U: \hA
  \to \hC$ and $V: \hB \to \hi{D}$ be two
  isometries with $\tauCD := (U \kron V) \rhoAB (U^\dagger \kron
  V^\dagger)$, then
  \begin{equation*}
    \chmaxeps{\eps}{A}{B}{\rho} = \chmaxeps{\eps}{C}{D}{\tau} \,.
  \end{equation*}
\end{lemma}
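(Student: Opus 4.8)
The plan is to reduce this statement to the corresponding invariance of the smooth \emph{min}-entropy, Lemma~\ref{lemma:minepsiso}, by going through the duality Definition~\ref{def:maxeps}. The conceptual point that makes this clean is that an isometry $V$ acting on the conditioning system~$B$ becomes invisible once $B$ is purified away: after passing to a purification, only the isometry $U$ on~$A$ survives, and that is exactly the situation handled by Lemma~\ref{lemma:minepsiso}. So the isometry on the conditioning system costs nothing, and the isometry on $A$ is dealt with by the min-entropy result applied to the purifying system.

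Concretely, I would first fix a purification $\rhoABE \in \subnormstates{\hABE}$ of $\rhoAB$, choosing the purifying system label $E$ to avoid clashing with the image space $\hC$ of $U$; by Definition~\ref{def:maxeps} this gives $\chmaxeps{\eps}{A}{B}{\rho} = -\chmineps{\eps}{A}{E}{\rho}$. Next I apply the two isometries together with the identity on the purifying system, setting $\tauCDE := (U \kron V \kron \idE)\,\rhoABE\,(U^\dagger \kron V^\dagger \kron \idE)$. Since $U \kron V \kron \idE$ is an isometry, $\tauCDE$ is pure, and $\ptr{E}\,\tauCDE = (U \kron V)\rhoAB(U^\dagger \kron V^\dagger) = \tauCD$, so $\tauCDE$ is a purification of $\tauCD$ and Definition~\ref{def:maxeps} gives $\chmaxeps{\eps}{C}{D}{\tau} = -\chmineps{\eps}{C}{E}{\tau}$. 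The crux is then to identify the marginal $\tau_{CE} = \ptr{D}\,\tauCDE$: pulling $U$ and $\idE$ (which commute with $\ptr{D}$) outside the partial trace and using $V^\dagger V = \idB$ so that $\ptr{D}$ of $V(\cdot)V^\dagger$ reduces to $\ptr{B}$, one finds $\tau_{CE} = (U \kron \idE)\,\rhoAE\,(U^\dagger \kron \idE)$. In other words $\tau_{CE}$ arises from $\rhoAE$ by the local isometry $U$ on $A$ and the identity (itself an isometry) on $E$, so Lemma~\ref{lemma:minepsiso} applied to these two isometries gives $\chmineps{\eps}{C}{E}{\tau} = \chmineps{\eps}{A}{E}{\rho}$. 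Chaining the three equalities yields $\chmaxeps{\eps}{C}{D}{\tau} = -\chmineps{\eps}{C}{E}{\tau} = -\chmineps{\eps}{A}{E}{\rho} = \chmaxeps{\eps}{A}{B}{\rho}$, as claimed.

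The only genuine obstacle I anticipate is the partial-trace identity $\ptr{D}\,[V X V^\dagger] = \ptr{B}\,X$ for the isometry $V$, which makes the $V$-isometry disappear upon tracing out $D$; everything else is bookkeeping through the two definitions. I would justify it by expanding the partial trace in an orthonormal basis of $\hD$ and invoking $V^\dagger V = \idB$ (equivalently, by checking it on product operators and extending linearly). I would also record the minor point that $\tr\,\tauCD = \tr\,\rhoAB$, since applying the isometries preserves the trace, so the side condition $\sqrt{\tr\,\tau} > \eps$ required for the $\eps$-balls in Definition~\ref{def:eps-ball} is automatically inherited.
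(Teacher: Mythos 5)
Your proposal is correct and follows essentially the same route as the paper: purify $\rhoAB$ to $\rhoABE$, observe that $(U\kron V\kron\idE)\rhoABE(U^\dagger\kron V^\dagger\kron\idE)$ purifies $\tauCD$, and reduce to Lemma~\ref{lemma:minepsiso} via the duality in Definition~\ref{def:maxeps}. The only difference is that you spell out the marginal computation $\tau_{CE}=(U\kron\idE)\rhoAE(U^\dagger\kron\idE)$ that the paper leaves implicit, which is a harmless (and welcome) elaboration.
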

\begin{proof}
  Let $\rhoABE$ be a purification of $\idx{\rho}{AB}$,
  then $\tauCDE = (U \kron V \kron \idE) \rhoABE (U^\dagger \kron
  V^\dagger \kron \idE)$ is a purification of $\tauCD$. Thus,
  \begin{align*}
    \chmaxeps{\eps}{A}{B}{\rho} &= - \chmineps{\eps}{A}{E}{\rho} \\
    &= -\chmineps{\eps}{C}{E}{\tau} = \chmaxeps{\eps}{C}{D}{\tau} \, .
  \end{align*}
\end{proof}
The $\eps$-smooth max-entropy can also be written as an optimization
over an $\eps$-ball of states:
\begin{lemma}
  \label{lemma:maxepsball}
  Let $\eps \geq 0$ and $\rhoAB \in \subnormstates{\hAB}$, then
  \begin{equation*}
    \chmaxeps{\eps}{A}{B}{\rho} = \!\! \min_{\rhotAB \in
    \epsball{\eps}{\rhoAB}} \! \chmax{A}{B}{\rhot}\, .
  \end{equation*}
\end{lemma}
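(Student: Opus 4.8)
The plan is to unfold the definition of the smooth max-entropy and transport the optimization from the $AC$-ball to the $AB$-ball one state at a time, using the non-smooth duality of \cite{koenig08} applied to purifications. Fix a purification $\rhoABC$ of $\rhoAB$; it purifies both $\rhoAB$ (tracing out $C$) and $\rhoAC$ (tracing out $B$). By Definition~\ref{def:maxeps} and \eqref{eqn:duality},
\begin{equation*}
\chmaxeps{\eps}{A}{B}{\rho} = -\chmineps{\eps}{A}{C}{\rho} = \min_{\rhotAC\in\epsball{\eps}{\rhoAC}} \big({-}\chmin{A}{C}{\rhot}\big)\,,
\end{equation*}
and I abbreviate the target as $M := \min_{\rhotAB\in\epsball{\eps}{\rhoAB}}\chmax{A}{B}{\rhot}$, which is attained since the ball is compact (Property~i)). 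The central mechanism is that for any \emph{pure} $\rhotABC$ the non-smooth relation $\chmin{A}{C}{\rhot} = -\chmax{A}{B}{\rhot}$ links the two marginals, so Property~vii) (which produces a pure state in the ball with a prescribed marginal) lets me pair each $AC$-optimizer with an $AB$-state and vice versa.

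To prove $\chmaxeps{\eps}{A}{B}{\rho}\le M$, I would take a minimizer $\rhotAB$ of $M$ and purify it: by Property~vii) with purifying system $\hC$ there is a pure $\rhotABC\in\epsballpure{\eps}{\rhoABC}$ with $\ptr{C}\,\rhotABC=\rhotAB$. Its other marginal $\rhotAC:=\ptr{B}\,\rhotABC$ lies in $\epsball{\eps}{\rhoAC}$ by Property~vi), and the pure-state duality gives $\chmin{A}{C}{\rhot}=-\chmax{A}{B}{\rhot}=-M$, whence $\chmineps{\eps}{A}{C}{\rho}\ge -M$. For the reverse bound $\chmaxeps{\eps}{A}{B}{\rho}\ge M$ I would run the same argument with the roles of $B$ and $C$ exchanged: take a maximizer $\rhotAC$ of $\chmineps{\eps}{A}{C}{\rho}$, purify it via Property~vii) with purifying system $\hB$ to a pure $\rhotABC\in\epsballpure{\eps}{\rhoABC}$ with $\ptr{B}\,\rhotABC=\rhotAC$, set $\rhotAB:=\ptr{C}\,\rhotABC\in\epsball{\eps}{\rhoAB}$, and use $\chmin{A}{C}{\rhot}=-\chmax{A}{B}{\rhot}\le -M$ to get $\chmineps{\eps}{A}{C}{\rho}\le -M$.

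The hard part is that Property~vii) purifies a state on $\h$ only with a system $\h'$ satisfying $\dim\h'\ge\dim\h$. The two bounds above hence demand $\dim\hC\ge\dim\hAB$ and $\dim\hB\ge\dim\hAC$ respectively, and these are jointly unsatisfiable (together they would force $\dim\hA=1$). The way out is to establish the two inequalities on \emph{separately} enlarged Hilbert spaces and argue that none of the relevant quantities changes. For $\chmaxeps{\eps}{A}{B}{\rho}\le M$ I enlarge the purifying system $\hC$ until $\dim\hC\ge\dim\hAB$; this leaves $\chmineps{\eps}{A}{C}{\rho}$, and hence $\chmaxeps{\eps}{A}{B}{\rho}$, invariant by Lemma~\ref{lemma:minepsiso}, and it does not touch $M$. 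For $\chmaxeps{\eps}{A}{B}{\rho}\ge M$ I instead embed $\hB$ isometrically into a larger $\hi{B'}$ with $\dim\hi{B'}\ge\dim\hAC$ (taking $\hC$ minimal); here $\chmaxeps{\eps}{A}{B}{\rho}$ is invariant by Lemma~\ref{lemma:maxepsiso}.

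The remaining and most delicate point is that $M$ is unchanged when $\hB$ is enlarged by an isometry. I expect this to follow by combining the isometry-invariance of the non-smooth max-entropy (immediate from \eqref{eqn:max-alt} and the invariance of the fidelity) with the invariance of the $\eps$-ball under isometries (Property~v)): the forward part of Property~v) maps the $AB$-ball into the $\hi{B'}$-ball while preserving $\chmax{A}{B}{}$, yielding one inequality at once, and the backward part (projecting onto the image of the isometry, which keeps the smoothed state inside the ball by Lemma~\ref{lemma:purified-monotonicity} and can only decrease $\chmax{A}{B}{}$) supplies the other. Verifying that this projection step does not increase the non-smooth max-entropy is the one calculation I anticipate needing care, after which the two bounds combine to the claimed equality.
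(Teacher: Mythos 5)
Your proposal is correct and follows essentially the same route as the paper's proof: unfold the duality definition, pass between the $AC$-ball and the $AB$-ball via purifications inside the ball (Property~vii) and Lemma~\ref{lemma:epsballineq}), and resolve the incompatible dimension requirements by enlarging $\hC$ for one inequality and embedding $\hB$ into a larger $\hi{B$'$}$ for the other, invoking Lemmas~\ref{lemma:minepsiso} and~\ref{lemma:maxepsiso}. The one step you leave unverified---that projecting the enlarged $B$ system back onto $\hB$ does not increase the non-smooth max-entropy---is true, and the paper establishes exactly this fact on the dual side: the projected state $\idx{\rhob}{AB$'$C} = (\idA\kron\PiB\kron\idC)\,\idx{\rhot}{AB$'$C}\,(\idA\kron\PiB\kron\idC)$ remains pure and in $\epsballpure{\eps}{\idx{\rho}{AB$'$C}}$, and since the projection acts on the traced-out system its $AC$ marginal satisfies $\rhobAC \leq \rhotAC$, whence $\chmin{A}{C}{\bar{\rho}} \geq \chmin{A}{C}{\rhot}$ by monotonicity of the functional $\Phi$, which for pure tripartite states is precisely the inequality $\chmax{A}{B}{\bar{\rho}} \leq \chmax{A}{B$'$}{\rhot}$ you need. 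The paper folds this into a single chain of inequalities rather than isolating your intermediate claim that the minimum $M$ is invariant under enlarging $\hB$, but the mechanism is identical.
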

In order to prove the above lemma, we characterize the $\eps$-ball
in terms of an $\eps$-ball on the purified space. The following
lemma follows directly from Properties~vi) and~vii) in
Section~\ref{sec:eps-ball} and will be used repeatedly:
\begin{lemma}
  \label{lemma:epsballineq} Let $\rho \in
  \subnormstates{\h}$ and $\phi \in \subnormstates{\h \kron \h'}$ be a
  purification of $\rho$, then
  \begin{equation*}
    \label{eqn:epsballineq} \epsball{\eps}{\h;\rho} \supseteq
    \{ \rhot \in \subnormstates{\h} : \exists\, \phit \in
    \epsballpure{\eps}{\h \kron \h';\phi} \ \textrm{s.t.}\ \rhot =
    \ptr{\h'}{\phit} \}
  \end{equation*}
  and the two sets are identical if the Hilbert space dimensions
  satisfy $\dim \h' \geq \dim \h$.
\end{lemma}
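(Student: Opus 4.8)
The plan is to observe that the two inclusions comprising this lemma are precisely the content of Properties~vi) and~vii) of Section~\ref{sec:eps-ball}, so the proof reduces to invoking each property in the appropriate direction and keeping track of which factor the partial trace acts on. The containment direction will follow unconditionally from monotonicity under partial trace, while the reverse inclusion --- needed to promote the containment to an equality --- will require the ``sufficiently large'' hypothesis $\dim \h' \geq \dim \h$ that is built into Property~vii).

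For the inclusion $\supseteq$, I would take an arbitrary $\rhot$ in the right-hand set, so that there is a pure state $\phit \in \epsballpure{\eps}{\h \kron \h';\phi}$ with $\rhot = \ptr{\h'}{\phit}$. Since $\epsballpure{\eps}{\h \kron \h';\phi} \subseteq \epsball{\eps}{\h \kron \h';\phi}$ by definition, $\phit$ lies in the full $\eps$-ball around $\phi$. Applying Property~vi) to the partial trace over $\h'$ then gives $\ptr{\h'}{\phit} \in \epsball{\eps}{\h;\,\ptr{\h'}{\phi}}$. Because $\phi$ is a purification of $\rho$, we have $\ptr{\h'}{\phi} = \rho$, so $\rhot \in \epsball{\eps}{\h;\rho}$, as required. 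Note that no constraint on dimensions enters here, so this inclusion holds in full generality.

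For the reverse inclusion, I would start from an arbitrary $\rhot \in \epsball{\eps}{\h;\rho}$ and rewrite $\rho = \ptr{\h'}{\phi}$, so that $\rhot \in \epsball{\eps}{\h;\,\ptr{\h'}{\phi}}$. This is exactly the hypothesis of Property~vii) with $\varphi = \phi$. Provided $\dim \h' \geq \dim \h$, Property~vii) produces a pure state $\phit \in \epsballpure{\eps}{\h \kron \h';\phi}$ with $\rhot = \ptr{\h'}{\phit}$, which places $\rhot$ in the right-hand set. Combining the two inclusions yields the claimed equality whenever the dimension condition holds.

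The only genuinely load-bearing point --- and the reason the equality, unlike the containment, carries a dimension hypothesis --- is the ``sufficiently large Hilbert space'' restriction inside Property~vii). That restriction traces back to Uhlmann's theorem as used in Lemma~\ref{lemma:purified-Uhlmann}: one can realize a purification of $\tau$ matching a given purification of $\rho$ on the same space $\h \kron \h'$ only when $\h'$ is large enough to support a purification, i.e.\ when $\dim \h' \geq \dim \h$. I would therefore take care to invoke Property~vii) only in this regime, while noting that Property~vi), being mere monotonicity of the ball under a TP-CPM, imposes no such constraint.
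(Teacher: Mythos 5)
Your proposal is correct and follows exactly the route the paper intends: the paper itself dispatches this lemma with the single remark that it ``follows directly from Properties~vi) and~vii)'', which is precisely the two-inclusion argument you spell out. Your additional observation that only the reverse inclusion needs the dimension hypothesis (via Uhlmann's theorem inside Property~vii)) is accurate and consistent with the statement of the lemma.
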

\begin{proof}[Proof of Lemma~\ref{lemma:maxepsball}]
  Let $\rhoABC \in \subnormstates{\hABC}$ be a purification of
  $\rhoAB$ with $\dim \hC \geq \dim \hAB$. Then, using
  Lemma~\ref{lemma:epsballineq} as well as Definition~\ref{def:maxeps}
  twice each, we have
  \begin{align*}
    \chmaxeps{\eps}{A}{B}{\rho} &= - \max_{\rhotAC \in
      \epsball{\eps}{\rhoAC}} \chmin{A}{C}{\rhot}\\
    &\leq \min_{\rhotABC \in \epsballpure{\eps}{\rhoABC}}
    -\chmin{A}{C}{\rhot}\\
    &= \min_{\rhotAB \in \epsball{\eps}{\rhoAB}}
    \chmax{A}{B}{\rhot} \, .
  \end{align*}

  To show the other direction, we choose a purification $\rhoABC \in
  \subnormstates{\hABC}$ and embed $\hB$ into a larger space
  $\hi{B$'$}$ such that $\dim \hi{B$'$} \geq \dim \hi{AC}$. We define
  $\idx{\rho}{AB$'$C}$ as the embedding of $\rhoABC$ into
  $\hi{AB$'$C}$. Furthermore, for each $\idx{\tilde{\rho}}{AB$'$C} \in
  \epsballpure{\eps}{\idx{\rho}{AB$'$C}}$ we construct
  $\idx{\rhob}{AB$'$C} = (\idA \kron \PiB \kron \idC)
  \idx{\rhot}{AB$'$C} (\idA \kron \PiB \kron \idC)$, where $\PiB$ is
  the projector onto $\hB$, such that $\idx{\rhob}{AB$'$C}$ remains in
  $\epsballpure{\eps}{\idx{\rho}{AB$'$C}}$ and has support on $\hABC$.
  The set $\{ \PiB, \idi{B$'$} - \PiB \}$ describes a measurement on
  $\hi{B$'$}$ and the post-measurement state $\rhobAC \in
  \subnormstates{\hAC}$ satisfies $\rhobAC \leq \rhotAC$. Hence,
  $\chmin{A}{C}{\rhot} \leq \chmin{A}{C}{\bar{\rho}}$.

  We may now write (again using Lemma~\ref{lemma:epsballineq} as well
  as Definition~\ref{def:maxeps} twice each):
  \begin{align*}
    \chmaxeps{\eps}{A}{B}{\rho}
    &= - \max_{\rhotAC \in \epsball{\eps}{\rhoAC}} \chmin{A}{C}{\rhot}\\
    &= \min_{\idx{\rhot}{AB$'$C} \in \epsballpure{\eps}{\idx{\rho}{AB$'$C}}} -
    \chmin{A}{C}{\rhot}\\
    &\geq \min_{\rhobABC \in \epsballpure{\eps}{\rhoABC}} - \chmin{A}{C}{\bar{\rho}} \\
    &\geq \min_{\rhobAB \in \epsball{\eps}{\rhoAB}} \chmax{A}{B}{\bar{\rho}} \, ,
  \end{align*}
  which concludes the proof.
\end{proof}

\section{Data-Processing Inequalities}
\label{sec:dataproc}

As an example of an application of the duality between smooth
conditional min- and max-entropies, we consider data-processing
inequalities for the two entropies. 

We expect measures of uncertainty about the system A given side
information B to be non-decreasing under local physical operations
applied to the B system. Here, we show that this is indeed the case
for $\chmineps{\eps}{A}{B}{}$ and $\chmaxeps{\eps}{A}{B}{}$. The
most general physical operations are modeled by TP-CPMs and we
denote by $\opidA$ the identity TP-CPM on $\posops{\hA}$.
\begin{theorem}
  \label{thm:data-proc}
  Let $\eps \geq 0$, $\rhoAB \in \subnormstates{\hAB}$ and
  $\mathcal{E}: \posops{\hB} \to \posops{\hD}$ be a
  TP-CPM with $\tauAD := \big( \opidA \kron \mathcal{E} \big)
  (\rhoAB)$, then
  \begin{align*}
    &\chmineps{\eps}{A}{B}{\rho} \leq \chmineps{\eps}{A}{D}{\tau} \quad
    \textrm{and} \\
    &\chmaxeps{\eps}{A}{B}{\rho} \leq \chmaxeps{\eps}{A}{D}{\tau}\, .
  \end{align*}
\end{theorem}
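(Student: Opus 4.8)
The plan is to establish the min-entropy inequality directly from the definition and then to obtain the max-entropy inequality \emph{for free} via the duality of Definition~\ref{def:maxeps}, exactly as anticipated in the introduction.

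For the min-entropy, I would work with the reformulation~\eqref{eqn:mineps-alt}. Let $\rhotAB \in \epsball{\eps}{\rhoAB}$ and $\sigmaB \in \posops{\hB}$ be an optimal pair, so that $\rhotAB \leq \idA \kron \sigmaB$ and $\chmineps{\eps}{A}{B}{\rho} = -\log \tr\,\sigmaB$; the maximum is attained by compactness of the $\eps$-ball (Property~i)) together with continuity of the min-entropy. Since $\mathcal{E}$ is completely positive, the map $\opidA \kron \mathcal{E}$ is positive and hence preserves the operator inequality, giving $\idx{\taut}{AD} \leq \idA \kron \mathcal{E}(\sigmaB)$, where $\idx{\taut}{AD} := (\opidA \kron \mathcal{E})(\rhotAB)$. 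Because $\mathcal{E}$ is trace preserving, $\mathcal{E}(\sigmaB) \in \posops{\hD}$ satisfies $\tr\,\mathcal{E}(\sigmaB) = \tr\,\sigmaB$, and by Lemma~\ref{lemma:purified-monotonicity} applied to the TP-CPM $\opidA \kron \mathcal{E}$ we have $P(\idx{\taut}{AD}, \tauAD) \leq P(\rhotAB, \rhoAB) \leq \eps$, so $\idx{\taut}{AD} \in \epsball{\eps}{\tauAD}$. Thus the pair $(\idx{\taut}{AD}, \mathcal{E}(\sigmaB))$ is a valid candidate in~\eqref{eqn:mineps-alt} for $\tauAD$, yielding
\begin{equation*}
\chmineps{\eps}{A}{D}{\tau} \geq -\log \tr\,\mathcal{E}(\sigmaB) = -\log \tr\,\sigmaB = \chmineps{\eps}{A}{B}{\rho}\, .
\end{equation*}

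For the max-entropy I would exploit the duality together with a Stinespring dilation of $\mathcal{E}$. Let $\rhoABC$ be a purification of $\rhoAB$, so that $\chmaxeps{\eps}{A}{B}{\rho} = -\chmineps{\eps}{A}{C}{\rho}$, and write $\mathcal{E}(\cdot) = \ptr{F}(W \cdot W^\dagger)$ for an isometry $W : \hB \to \hD \kron \hi{F}$. Applying $\idA \kron W \kron \idC$ to $\rhoABC$ produces a pure state $\rho'$ on $\hA \kron \hD \kron \hi{F} \kron \hC$ whose marginal on $\hA \kron \hD$ equals $\tauAD$; thus $\rho'$ is a purification of $\tauAD$ with purifying system $\hi{F} \kron \hC$, and the duality gives $\chmaxeps{\eps}{A}{D}{\tau} = -\chmineps{\eps}{A}{FC}{\rho'}$. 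The key observation is that tracing out $\hD$ yields a state $\idx{\rho'}{AFC}$ with $\ptr{F}\,\idx{\rho'}{AFC} = \rhoAC$, because $W^\dagger W = \idB$. Consequently $\ptr{F}$ is a TP-CPM on the conditioning system that maps $\rho'$ (conditioned on $FC$) to $\rhoAC$ (conditioned on $C$), and the already-established min-entropy inequality yields $\chmineps{\eps}{A}{FC}{\rho'} \leq \chmineps{\eps}{A}{C}{\rho}$. Negating,
\begin{equation*}
\chmaxeps{\eps}{A}{D}{\tau} = -\chmineps{\eps}{A}{FC}{\rho'} \geq -\chmineps{\eps}{A}{C}{\rho} = \chmaxeps{\eps}{A}{B}{\rho}\, .
\end{equation*}

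The min-entropy step is essentially bookkeeping: the three facts it rests on\,|\,positivity of $\opidA \kron \mathcal{E}$, trace preservation, and monotonicity of $P$ under TP-CPMs\,|\,are all immediate from the earlier results. The main work lies in the max-entropy step, and specifically in setting up the dilation correctly: one must verify that $\rho'$ is genuinely a purification of $\tauAD$ and, crucially, that discarding $\hi{F}$ reduces the conditioning system from $\hi{F} \kron \hC$ to $\hC$ in precisely the form needed to invoke the min-entropy inequality. Identifying this reduction\,|\,that processing the environment $\hi{F}$ of the dilation is dual to processing the output $\hD$ of $\mathcal{E}$\,|\,is the conceptual heart of the argument; once it is in place, the duality does the remaining work automatically.
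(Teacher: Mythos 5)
Your proof is correct and follows essentially the same route as the paper: establish the min-entropy inequality first, then obtain the max-entropy inequality by duality, using a Stinespring dilation so that processing the output $\hD$ corresponds to tracing out the dilation's environment on the purifying side. The only (harmless) difference is in the min-entropy step, where you apply the positive map $\opidA \kron \mathcal{E}$ directly to the operator inequality $\rhotAB \leq \idA \kron \sigmaB$ and invoke Lemma~\ref{lemma:purified-monotonicity}, whereas the paper first reduces to the partial-trace case via the dilation and Lemma~\ref{lemma:minepsiso}; your version is slightly more direct and equally valid.
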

\begin{proof}
  For a sufficiently large Hilbert space $\hR$, the TP-CPM
  $\mathcal{E}$ can be decomposed into an isometry $U: \hB \to
  \hi{DR}$ followed by a partial trace over $\hi{R}$ (see
  e.g.~\cite{nielsen00}). The invariance of the two quantities under
  local isometries was established in Lemmas~\ref{lemma:minepsiso}
  and~\ref{lemma:maxepsiso}, so it remains to show that the quantities
  are non-decreasing under partial trace\footnote{This property is
    sometimes referred to as strong sub-additivity of the smooth min-
    and max-entropies. This is due to the fact that $\chvn{A}{BC}{}
    \leq \chvn{A}{B}{}$ is equivalent to the strong sub-additivity of the von
    Neumann entropy.}, i.e.\ the inequalities
  \begin{align*}
    &\chmineps{\eps}{A}{DR}{\tau} \leq \chmineps{\eps}{A}{D}{\tau}
    \quad \textrm{and} \\
    &\chmaxeps{\eps}{A}{DR}{\tau} \leq
    \chmaxeps{\eps}{A}{D}{\tau} \, .
  \end{align*}
  We first consider the inequality for the smooth min-entropy.
  Let $\idx{\taut}{ADR} \in \epsball{\eps}{\idx{\tau}{ADR}}$ and
  $\idx{\sigma}{DR} \in \posops{\hi{DR}}$ be the pair that optimizes
  the expression in~\eqref{eqn:mineps-alt} for
  $\chmineps{\eps}{A}{DR}{\tau}$, then
  $$\idx{\taut}{ADR} \leq \idA \kron \idx{\sigma}{DR} \implies
  \idx{\taut}{AD} \leq \idA \kron \idx{\sigma}{D}\,$$
  and, due to Property~vi), the pair $\idx{\taut}{AD} \in
  \epsball{\eps}{\idx{\tau}{AD}}$ and $\idx{\sigma}{D} \in
  \posops{\hi{D}}$ is a candidate for the optimization in
  $\chmineps{\eps}{A}{D}{\tau}$. Thus, $\chmineps{\eps}{A}{DR}{\tau}
  \leq \chmineps{\eps}{A}{D}{\tau}$.

  For the smooth max-entropy, let $\idx{\tau}{ADRE} \in
  \subnormstates{\hi{ADRE}}$ be a purification of $\idx{\tau}{ADR}$, then
  \begin{align*}
    \chmaxeps{\eps}{A}{DR}{\tau} &= - \chmineps{\eps}{A}{E}{\tau} \\
    &\leq -\chmineps{\eps}{A}{ER}{\tau} =
    \chmaxeps{\eps}{A}{D}{\tau}\, ,
  \end{align*}
  which concludes the proof.
\end{proof}

The second pair of data-processing inequalities concerns projective
(von Neumann) measurements of the system A. Such measurements can be
described in terms of an orthonormal basis $\{ \keti{i}{A} \}_i$ of
$\hA$ and a TP-CPM $\mathcal{M}$ from $\hA$ to $\hX
\iso \hA$ which maps $\rhoA$ to $\sum_i \bracket{i}{\rhoA}{i}\,
\proji{i}{i}{X}$. We expect that the uncertainty about the system A as
well as the entropies $\chmineps{\eps}{AB}{C}{}$ and
$\chmaxeps{\eps}{AB}{C}{}$ will not decrease with such a measurement.
\begin{theorem}
  \label{thm:data-proc2}
  Let $\eps \geq 0$, $\rhoABC \in \subnormstates{\hABC}$ and
  $\mathcal{M} : \posops{\hA} \to \posops{\hX}$ a TP-CPM
  describing a projective measurement with $\idx{\tau}{XBC} :=
  (\mathcal{M} \kron \opidi{BC})(\rhoABC)$. Then,
  \begin{align*}
    &\chmineps{\eps}{AB}{C}{\rho}
    \leq \chmineps{\eps}{XB}{C}{\tau} \quad \textrm{and} \\
    &\chmaxeps{\eps}{AB}{C}{\rho} \leq \chmaxeps{\eps}{XB}{C}{\tau} \,.
  \end{align*}
\end{theorem}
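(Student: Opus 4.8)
The plan is to reduce both inequalities to the single statement that \emph{pinching} (dephasing) the $A$-subsystem of the main system cannot decrease either smooth conditional entropy, and then to treat the min-entropy case directly and the max-entropy case by duality. Writing $P_i=\proji{i}{i}{A}$, the measurement factorises as $\mathcal{M}=\mathcal{J}\circ\mathcal{D}$, where $\mathcal{D}\colon\rhoA\mapsto\sum_i P_i\,\rhoA\,P_i$ is the pinching in the measurement basis and $\mathcal{J}(\cdot)=J(\cdot)J^\dagger$ is conjugation by the relabelling isometry $J\colon\keti{i}{A}\mapsto\keti{i}{X}$. With $\idx{\tau'}{ABC}:=(\mathcal{D}\kron\opidi{BC})(\rhoABC)$ one has $\idx{\tau}{XBC}=(J\kron\idi{BC})\,\idx{\tau'}{ABC}\,(J^\dagger\kron\idi{BC})$, so Lemmas~\ref{lemma:minepsiso} and~\ref{lemma:maxepsiso}, applied with the main-system isometry $J\kron\idB$ and the identity on $C$, give $\chmineps{\eps}{XB}{C}{\tau}=\chmineps{\eps}{AB}{C}{\tau'}$ and $\chmaxeps{\eps}{XB}{C}{\tau}=\chmaxeps{\eps}{AB}{C}{\tau'}$. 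It thus suffices to prove $\chmineps{\eps}{AB}{C}{\rho}\leq\chmineps{\eps}{AB}{C}{\tau'}$ and $\chmaxeps{\eps}{AB}{C}{\rho}\leq\chmaxeps{\eps}{AB}{C}{\tau'}$.

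For the min-entropy I would argue directly. If $2^{-\lambda}\,\idi{AB}\kron\sigmaC\geq\rhoABC$ witnesses $\chmin{AB}{C}{\rho}$, then applying the CP map $\mathcal{D}\kron\opidi{BC}$ preserves the operator inequality; since $\mathcal{D}(\idA)=\idA$ the left-hand side is unchanged while the right becomes $\idx{\tau'}{ABC}$, giving $\chmin{AB}{C}{\rho}\leq\chmin{AB}{C}{\tau'}$. To smooth, let $\rhotABC\in\epsball{\eps}{\rhoABC}$ attain $\chmineps{\eps}{AB}{C}{\rho}$; then $(\mathcal{D}\kron\opidi{BC})(\rhot)\in\epsball{\eps}{\tau'}$ by Lemma~\ref{lemma:purified-monotonicity}, and the non-smooth bound yields $\chmineps{\eps}{AB}{C}{\rho}=\chmin{AB}{C}{\rhot}\leq\chmin{AB}{C}{\mathcal{D}(\rhot)}\leq\chmineps{\eps}{AB}{C}{\tau'}$.

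For the max-entropy I would use duality. Purify $\rhoABC$ to a pure $\idx{\rho}{ABCE}$, so that $\chmaxeps{\eps}{AB}{C}{\rho}=-\chmineps{\eps}{AB}{E}{\rho}$. Let $W\colon\hA\to\hA\kron\hi{F}$, $\keti{i}{A}\mapsto\keti{i}{A}\keti{i}{F}$, be the Stinespring dilation of $\mathcal{D}$; then $\idx{\tau'}{ABCEF}:=(W\kron\idi{BCE})\,\idx{\rho}{ABCE}\,(W^\dagger\kron\idi{BCE})$ purifies $\idx{\tau'}{ABC}$, so $\chmaxeps{\eps}{AB}{C}{\tau'}=-\chmineps{\eps}{AB}{EF}{\tau'}$ and the claim reduces to $\chmineps{\eps}{AB}{EF}{\tau'}\leq\chmineps{\eps}{AB}{E}{\rho}$. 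To prove this, let $\idx{\xi}{ABEF}\in\epsball{\eps}{\idx{\tau'}{ABEF}}$ attain the left-hand side with normalised witness $\omega_{EF}$ and $\lambda=\chmin{AB}{EF}{\xi}$, so $2^{-\lambda}\idi{AB}\kron\omega_{EF}\geq\idx{\xi}{ABEF}$. Conjugating by the co-isometry $W^\dagger$ and using the identity
\[
(W^\dagger\kron\idi{BE})(\idi{AB}\kron\omega_{EF})(W\kron\idi{BE})=\sum_i\proji{i}{i}{A}\kron\idB\kron\brai{i}{F}\omega_{EF}\keti{i}{F}\ \leq\ \idi{AB}\kron\ptr{F}\,\omega_{EF}
\]
together with $W^\dagger W=\idA$, I obtain $2^{-\lambda}\idi{AB}\kron\ptr{F}\,\omega_{EF}\geq\idx{\zeta}{ABE}$, where $\idx{\zeta}{ABE}:=(W^\dagger\kron\idi{BE})\,\idx{\xi}{ABEF}\,(W\kron\idi{BE})$. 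Since $\ptr{F}\,\omega_{EF}$ is a normalised state on $E$, this gives $\chmin{AB}{E}{\zeta}\geq\lambda$; and since conjugation by $W^\dagger$ is a trace-non-increasing CPM sending $\idx{\tau'}{ABEF}$ to $\rhoABE$, Lemma~\ref{lemma:purified-monotonicity} places $\idx{\zeta}{ABE}$ in $\epsball{\eps}{\rhoABE}$. Hence $\chmineps{\eps}{AB}{EF}{\tau'}=\lambda\leq\chmin{AB}{E}{\zeta}\leq\chmineps{\eps}{AB}{E}{\rho}$, as required.

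The max-entropy inequality is the crux. A direct approach through Lemma~\ref{lemma:maxepsball} fails, because pinching maps $\epsball{\eps}{\rho}$ \emph{into} $\epsball{\eps}{\tau'}$, which is the wrong direction for bounding a minimisation from above; duality is therefore genuinely needed. Furthermore, the reduced inequality $\chmineps{\eps}{AB}{EF}{\tau'}\leq\chmineps{\eps}{AB}{E}{\rho}$ is \emph{not} an instance of the generic partial-trace data processing of Theorem~\ref{thm:data-proc}: transferring an arbitrary register from the conditioning system into the main system can strictly decrease the min-entropy. The argument succeeds only because $F$ is a \emph{coherent copy} of $A$, which is precisely what collapses $W^\dagger(\idi{AB}\kron\omega_{EF})W$ onto $\idi{AB}\kron\ptr{F}\,\omega_{EF}$ with a \emph{normalised} marginal on $E$.
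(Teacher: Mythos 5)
Your proof is correct, and the min-entropy half coincides with the paper's argument (the paper applies $\mathcal{M}\kron\opidi{BC}$ directly to the operator inequality using $\mathcal{M}(\idAB)=\idi{XB}$, without first splitting off the relabelling isometry, but this is cosmetic). The max-entropy half, however, takes a genuinely different route. The paper does \emph{not} pass to the dual picture: it starts from an optimal $\idx{\taub}{XBC}\in\epsball{\eps}{\idx{\tau}{XBC}}$ for $\chmaxeps{\eps}{XB}{C}{\tau}$, extends it to the intermediate system $XX'$ via Corollary~\ref{cor:purified-ext}, and then uses the explicit fidelity expression~\eqref{eqn:max-alt} together with monotonicity of the fidelity under partial trace and a projection onto the image of the Stinespring isometry, concluding with Lemma~\ref{lemma:maxepsiso}. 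You instead dualize to a min-entropy statement on the purifying system and carry out the operator computation $W^\dagger(\idA\kron\omega_{\textnormal{F}})W=\sum_i\proji{i}{i}{A}\,\bracket{i}{\omega}{i}_{\textnormal{F}}\leq\idA\kron\ptr{F}\,\omega$ for the coherent copy $W\colon\keti{i}{A}\mapsto\keti{i}{A}\keti{i}{F}$; the resulting witness $\ptr{F}\,\omega_{EF}$ is still normalised, and Lemma~\ref{lemma:purified-monotonicity} keeps the conjugated state in the right $\eps$-ball. Both arguments are sound; the paper's buys brevity by exploiting the closed form~\eqref{eqn:max-alt} (which your route never needs), while yours stays entirely within the min-entropy/duality framework and makes explicit \emph{why} the statement is not a special case of Theorem~\ref{thm:data-proc} --- your closing observation that moving a generic register from the conditioning system to the main system can decrease the min-entropy, and that the coherent-copy structure of $F$ is what saves the day, is accurate and worth keeping.
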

\begin{proof}
  Note that $\mathcal{M}$ can be
  decomposed into an isometry $U: \hA \to \hX \kron \hX'$, $\hX' \iso
  \hX$ that maps $\keti{i}{A}$ to $\keti{i}{X} \kron \keti{i}{X$'$}$
  followed by a partial trace over $\hX'$. We denote the intermediate
  state by $\idx{\tau}{XX$'$BC}$ and the projector onto the image of
  $U$ by $\idx{\Pi}{XX$'$}$. Moreover, Note that $\mathcal{M}(\idA) = \idX$.

  We first prove the statement for the min-entropy. Let $\rhotABC \in
  \epsball{\eps}{\rhoABC}$ and $\sigmaC \in \posops{\hC}$ such that
  $\chmineps{\eps}{AB}{C}{\rho} =
  \chmin{AB}{C}{\rhot} = - \log \tr\,\sigmaC$. Then, $\rhotABC \leq
  \idAB \kron \sigmaC$ implies
  \begin{align*}
   \big(\mathcal{M} \kron \opidi{BC} \big)(\rhotABC) \leq
    (\mathcal{M} \kron \opidi{BC}) (\idAB \kron \sigma)
    = \idi{XB} \kron \sigmaC \, .
  \end{align*}
  The state $\idx{\taut}{XBC} := (\mathcal{M} \kron \opidi{BC})(\rhotABC)$ is in
  $\epsball{\eps}{\idx{\tau}{XBC}}$ due to
  Lemma~\ref{lemma:purified-monotonicity}. Hence, $\idx{\taut}{XBC}$
  with $\sigmaC$ is a candidate for the optimization in
  $\chmineps{\eps}{XB}{C}{\tau}$ and, thus,
  $\chmineps{\eps}{XB}{C}{\tau} \geq \chmineps{\eps}{AB}{C}{\rho}$.

  To prove the statement for the max-entropy, we let $\idx{\taub}{XBC}
  \in \epsball{\eps}{\idx{\tau}{XBC}}$ be such that
  $\chmaxeps{\eps}{XB}{C}{\tau} = \chmax{XB}{C}{\taub}$. We
  use Corollary~\ref{cor:purified-ext} to introduce its extension
  $\idx{\taub}{XX$'$BC} \in \epsball{\eps}{\idx{\tau}{XX$'$BC}}$. Furthermore, we
  employ~\eqref{eqn:max-alt} to get
  \begin{align*}
    \chmaxeps{\eps}{XB}{C}{\tau} &= \!\! \max_{\sigmaB \in
      \normstates{\hB}} \log F \big(\idx{\taub}{XBC}, \,\idi{XB} \kron
    \sigmaC \big)^2 \\
    &\geq \!\! \max_{\sigmaB \in \normstates{\hB}} \log F
    \big(\idx{\taub}{XX$'$BC},\, \idx{\Pi}{XX$'$} \kron \idB \kron \sigmaC
    \big)^2 \\
    &=  \!\! \max_{\sigmaB \in \normstates{\hB}} \log F \big(
    \idx{\breve{\tau}}{XX$'$BC},\, \idi{XX$'$B} \kron \sigmaC \big)^2 \\
    &= \chmax{XX$'$B}{C}{\breve{\tau}} \, ,
  \end{align*}
  where we used that the fidelity can only increase under partial
  trace and introduced the state $\idx{\breve{\tau}}{XX$'$BC} :=
  (\idx{\Pi}{XX$'$} \kron \idBC) \idx{\taub}{XX$'$BC}
  (\idx{\Pi}{XX$'$} \kron \idBC)$. We have $\idx{\breve{\tau}}{XX$'$BC} \in
  \epsball{\eps}{\idx{\tau}{XX$'$BC}}$ due to the definition of
  $\idx{\taub}{XX$'$BC}$ and Lemma~\ref{lemma:purified-monotonicity}.
  We use this and Lemma~\ref{lemma:maxepsiso} to write
  $\chmax{XX$'$B}{C}{\breve{\tau}} \geq \chmaxeps{\eps}{XX$'$B}{C}{\tau} =
  \chmaxeps{\eps}{AB}{C}{\rho}$, from which the lemma follows.
\end{proof}

Note that, in conjunction with the fully quantum
generalization of the AEP (Theorem 1 in~\cite{tomamichel08}), the inequalities in
Theorem~\ref{thm:data-proc} and~\ref{thm:data-proc2} imply 
the same inequalities for the von Neumann entropy.

\appendix

\section{Technical Results}
\label{app:tech}

Here, we establish some useful properties of the min- and max-entropies.
In particular, we give bounds on the min- and max-entropies in terms of the
Hilbert space dimensions%
, show their
\comment{Lipschitz} continuity as a function of the state
and prove that the max-entropy is concave%
. Properties analogous to the ones we present here are also found for the
von Neumann entropy (see e.g.~\cite{nielsen00,alicki03}).

\subsection{Preliminaries}

Let us consider the functional $\Phi: \rhoAB \mapsto
2^{-\chmin{A}{B}{\rho}}$, which we extend to arbitrary Hermitian
operators, $\hermops{\hAB}$, on $\hAB$ as follows:
\begin{equation*}
\label{eqn:ext-guessing-prob}
\Phi : \hermops{\hAB} \to \mathbb{R} \ , \ \rhoAB\ \mapsto
\mathop{\inf_{\sigmaB \in \hermops{\hB}}}_{\rhoAB \leq \idA \kron \sigmaB}  \tr\ \sigmaB \, .
\end{equation*}
The functional has the following properties:
\begin{enumerate}

\item[i)] Multiplication with scalar:
  Let $\lambda \geq 0$, then
  $\Phi(\lambda \rhoAB) = \lambda\,\Phi(\rhoAB)$ .

\item[ii)]
  Monotonicity: \
  $\rhoAB \geq \tauAB \implies \Phi(\rhoAB) \geq
  \Phi(\tauAB)$.

\item[iii)] Sub-Additivity:
  $\Phi(\rhoAB + \tauAB)
  \leq \Phi(\rhoAB) +
  \Phi(\tauAB)$. Furthermore, equality holds if $\trace{\rhoB \tauB} = 0$.

\item[iv)] Bounds:
  Let $\idx{d}{A} = \dim \hA$ and $d_\textrm{min} = \min \{ \idx{d}{A},
  \dim \hB \}$, then
  $\frac{1}{\idx{d}{A}} \tr\,\rhoAB  \leq
  \Phi(\rhoAB) \leq d_\textrm{min}\,\tr\,\{\rhoAB\}_+$.
  \begin{proof}
    To get the upper bound, first consider a normalized pure state
    $\phiAB$. Clearly, $\phiAB \leq \PiA^\phi \kron \PiB^\phi \leq
    \idA \kron \PiB^\phi$, where $\PiA^\phi$ and $\PiB^\phi$ are the
    projectors onto the support of $\phiA$ and $\phiB$, respectively.
    Furthermore $\tr\, \PiB^\phi \leq d$ thanks to the Schmidt
    decomposition. Using the eigenvalue decomposition $\rhoAB =
    \sum_i \lambda_i\,\phiAB^i$, we get
    \begin{equation}
      \label{eqn:guessing-upper-bound}
      \Phi(\rhoAB) \leq \tr\,\bigg( \sum_{i: \lambda_i > 0} \lambda_i\,
        \PiB^{\phi^i} \bigg) \leq d_\textrm{min}\, \tr\,\{\rhoAB \}_+\, .
    \end{equation}

    On the other hand, we have $\trace{\idA \kron \sigmaB} \geq
    \tr\,\rhoAB$ for any candidate $\sigmaB$, hence,
    \begin{equation}
      \label{eqn:guessing-lower-bound}
      \Phi(\rhoAB) \geq \frac{1}{\idx{d}{A}} \tr\,\rhoAB \, .
    \end{equation}
  \end{proof}

\end{enumerate}

Properties i) and iii) imply convexity of $\Phi$, i.e.\ $\Phi(\lambda \rhoAB + (1 - \lambda) \tauAB) \leq \lambda \Phi(\rhoAB) + (1-\lambda) \Phi(\tauAB)$.

\subsection{Bounds on the Conditional Entropies}
\label{app:bounds}

In \cite{tomamichel08} it was shown that, for $\rhobAB \in
\normstates{\hAB}$, we have $\chmin{A}{B}{\rhob} \leq
\chmax{A}{B}{\rhob}$. For sub-normalized states $\rhoAB = \tr\,\rhoAB
\cdot \rhobAB$, we thus have
\begin{equation}
  \label{eqn:ordering-subnorm}
  \chmin{A}{B}{\rho} + \log \tr\,\rhoAB
  \leq \chmax{A}{B}{\rho} - \log \tr\,\rhoAB \, .
\end{equation}

We now establish bounds on the min- and max-entropies:
\begin{lemma}
Let $\rhoAB \in \subnormstates{\hAB}$, $\idx{d}{A} = \dim \hA$ and $d_\textnormal{min} = \min\{ \idx{d}{A}, \dim \hB\}$, then
\begin{align}
\label{eqn:min-bounds}
- \log d_\textnormal{min} \ \leq\ & \chmin{A}{B}{\rho} + \log \tr\,\rhoAB\ \leq\ \log \idx{d}{A} \,  \nonumber\\
- \log d_\textnormal{min} \ \leq\ & \chmax{A}{B}{\rho} - \log \tr\,\rhoAB\ \leq\ \log \idx{d}{A} \, . \nonumber
\end{align}
\end{lemma}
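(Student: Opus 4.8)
The plan is to establish the two lines of the claim by different means: the min-entropy bounds follow directly from the functional $\Phi$ of the Preliminaries, while the max-entropy bounds are obtained by combining the alternative expression~\eqref{eqn:max-alt} (for the upper bound) with the ordering relation~\eqref{eqn:ordering-subnorm} (for the lower bound).

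For the first line I would start from the identity $\Phi(\rhoAB) = 2^{-\chmin{A}{B}{\rho}}$ together with the two-sided estimate supplied by~\eqref{eqn:guessing-upper-bound} and~\eqref{eqn:guessing-lower-bound}. Since $\rhoAB$ is positive semi-definite we have $\{\rhoAB\}_+ = \rhoAB$, so these estimates read
\begin{equation*}
\frac{1}{\idx{d}{A}}\, \tr\,\rhoAB \ \leq\ 2^{-\chmin{A}{B}{\rho}}\ \leq\ d_\textnormal{min}\, \tr\,\rhoAB \, .
\end{equation*}
Taking the binary logarithm throughout and rearranging then yields $-\log d_\textnormal{min} \leq \chmin{A}{B}{\rho} + \log \tr\,\rhoAB \leq \log \idx{d}{A}$, which is exactly the first line.

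For the max-entropy I would treat the two inequalities separately. The upper bound follows from~\eqref{eqn:max-alt} together with the Cauchy--Schwarz estimate $F(\rho,\tau) \leq \sqrt{\tr\,\rho}\,\sqrt{\tr\,\tau}$, which holds for arbitrary positive operators; since $\tr(\idA \kron \sigmaB) = \idx{d}{A}$ for any normalized $\sigmaB$, every candidate satisfies $F(\rhoAB, \idA \kron \sigmaB)^2 \leq \idx{d}{A}\,\tr\,\rhoAB$, and taking the logarithm and then the maximum over $\sigmaB$ gives $\chmax{A}{B}{\rho} - \log \tr\,\rhoAB \leq \log \idx{d}{A}$. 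For the lower bound I would not try to exhibit a state $\sigmaB$ making the fidelity large, but instead chain the ordering relation~\eqref{eqn:ordering-subnorm} with the min-entropy lower bound just proved:
\begin{equation*}
\chmax{A}{B}{\rho} - \log \tr\,\rhoAB\ \geq\ \chmin{A}{B}{\rho} + \log \tr\,\rhoAB\ \geq\ -\log d_\textnormal{min} \, .
\end{equation*}

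All the manipulations are elementary, so I do not expect a genuine obstacle; the only step that is not purely mechanical is recognizing that the max-entropy lower bound is obtained most cleanly by routing it through~\eqref{eqn:ordering-subnorm} rather than through a direct fidelity argument. A minor point worth checking is that the dimensional prefactor in~\eqref{eqn:guessing-upper-bound} is genuinely $d_\textnormal{min} = \min\{\idx{d}{A}, \dim \hB\}$ and not merely $\idx{d}{A}$, which is ensured by the Schmidt-rank bound on $\tr\,\PiB^\phi$ used there.
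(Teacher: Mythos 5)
Your proof is correct. The min-entropy bounds and the max-entropy lower bound coincide exactly with the paper's argument: the former follow from~\eqref{eqn:guessing-upper-bound} and~\eqref{eqn:guessing-lower-bound} applied to $\Phi(\rhoAB)=2^{-\chmin{A}{B}{\rho}}$ (with $\{\rhoAB\}_+=\rhoAB$), and the latter is obtained by chaining~\eqref{eqn:ordering-subnorm} with the min-entropy lower bound, which is the only way to reach $-\log d_{\textnormal{min}}$ rather than the weaker $-\log \idx{d}{A}$. The one place you diverge is the max-entropy upper bound: the paper routes it through the duality~\eqref{eqn:duality}, writing $\chmax{A}{B}{\rho}=-\chmin{A}{C}{\rho}$ for a purification on $\hABC$ and then invoking the already-established lower bound $\chmin{A}{C}{\rho}+\log\tr\,\rhoAB\geq -\log\min\{\idx{d}{A},\dim\hC\}\geq -\log\idx{d}{A}$ (together with $\tr\,\rhoAB\leq 1$), whereas you work directly from the fidelity expression~\eqref{eqn:max-alt} and the H\"older/Cauchy--Schwarz estimate $F(\rho,\tau)\leq\sqrt{\tr\,\rho}\,\sqrt{\tr\,\tau}$ applied to $\tau=\idA\kron\sigmaB$ with $\tr(\idA\kron\sigmaB)=\idx{d}{A}$. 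Both are valid; your version has the small advantage of not introducing a purifying system and of making the constant $\idx{d}{A}$ appear transparently as the trace of $\idA\kron\sigmaB$, while the paper's version keeps the entire max-entropy discussion anchored to the duality relation that is the theme of the paper and avoids relying on the (imported) characterization~\eqref{eqn:max-alt}. Your closing remark about the Schmidt-rank origin of the $d_{\textnormal{min}}$ prefactor in~\eqref{eqn:guessing-upper-bound} is also on point.
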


\begin{proof}
  The bounds on the min-entropy follow directly
  from~\eqref{eqn:guessing-upper-bound}
  and~\eqref{eqn:guessing-lower-bound}. The bounds on the max-entropy
  follow by duality~\eqref{eqn:duality}
  and~\eqref{eqn:ordering-subnorm}.
\end{proof}

\subsection{Continuity of the Conditional Entropies}
\label{app:cont}

The operational interpretation of the conditional min-entropy as a
guessing probability (cf.\ \cite{koenig08}) already implies its
continuity in the state. To see this, note that a discontinuity in
the guessing probability could be detected experimentally using a
fixed number of trials (the number depending only on the required
precision), hence giving us the means to distinguish between
arbitrarily close states for a cost (in terms of the number of
trials) independent of their distance. For sufficiently close
states, this would contradict the upper bound on the distinguishing
advantage~\eqref{eqn:dist-adv}. Here, we make this statement more
precise.
\begin{lemma}
\label{lemma:min-cont} Let $\rhoAB, \tauAB \in \subnormstates{\hAB}$
and $\delta := \dist(\rhoAB, \tauAB)$, then
\begin{equation*}
\big| \chmin{A}{B}{\rho} - \chmin{A}{B}{\tau} \big| \leq \frac{
\idx{d}{A} d_\textnormal{min}\, \delta }{ \ln 2 \cdot \min\{
  \tr\,\rhoAB,\, \tr\,\tauAB \} } \, .
\end{equation*}
\end{lemma}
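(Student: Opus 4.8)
The plan is to work with the functional $\Phi : \rhoAB \mapsto 2^{-\chmin{A}{B}{\rho}}$ from the preliminaries, so that $\chmin{A}{B}{\rho} = -\log \Phi(\rhoAB)$ and the claimed bound becomes a statement about $\big| \log \Phi(\rhoAB) - \log \Phi(\tauAB) \big|$. I would split the task into two parts: first bounding the additive difference $\big| \Phi(\rhoAB) - \Phi(\tauAB) \big|$ in terms of $\delta$, and then converting this into a bound on the difference of logarithms by means of a lower bound on $\min\{\Phi(\rhoAB), \Phi(\tauAB)\}$. Both ingredients are supplied by Properties~i)--iv) of $\Phi$.

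For the additive difference, the key observation is that although $\Delta := \rhoAB - \tauAB$ need not be positive semi-definite, one always has the operator inequality $\rhoAB \leq \tauAB + \{\Delta\}_+$. Applying monotonicity (Property~ii)) and then sub-additivity (Property~iii)) of $\Phi$ gives $\Phi(\rhoAB) \leq \Phi(\tauAB) + \Phi(\{\Delta\}_+)$. The upper bound of Property~iv) yields $\Phi(\{\Delta\}_+) \leq d_\textnormal{min}\,\tr\,\{\Delta\}_+ \leq d_\textnormal{min}\,\delta$, where the last step is the definition of the generalized trace distance $\dist(\rhoAB, \tauAB) = \delta$. Exchanging the roles of $\rho$ and $\tau$ (using $\tauAB \leq \rhoAB + \{-\Delta\}_+$) gives the reverse inequality, so that $\big| \Phi(\rhoAB) - \Phi(\tauAB) \big| \leq d_\textnormal{min}\,\delta$.

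For the logarithmic conversion I would use the elementary estimate $\big| \log a - \log b \big| \leq \big| a - b \big| / \big( \ln 2 \cdot \min\{a, b\} \big)$, which follows from the mean value theorem applied to the natural logarithm. Combined with the lower bound $\Phi(\rhoAB) \geq \tr\,\rhoAB / \idx{d}{A}$ from Property~iv), this chains together to give
\begin{align*}
\big| \chmin{A}{B}{\rho} - \chmin{A}{B}{\tau} \big|
= \big| \log \Phi(\rhoAB) - \log \Phi(\tauAB) \big|
&\leq \frac{\big| \Phi(\rhoAB) - \Phi(\tauAB) \big|}{\ln 2 \cdot \min\{ \Phi(\rhoAB), \Phi(\tauAB) \}} \\
&\leq \frac{\idx{d}{A}\, d_\textnormal{min}\, \delta}{\ln 2 \cdot \min\{ \tr\,\rhoAB,\, \tr\,\tauAB \}} \, ,
\end{align*}
which is exactly the claimed bound.

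The main obstacle---really the only nontrivial idea---is the one-sided operator inequality $\rhoAB \leq \tauAB + \{\rhoAB - \tauAB\}_+$, which lets the monotonicity and sub-additivity of $\Phi$ drive the argument even though $\Delta$ is indefinite. Everything else is bookkeeping with the already-established properties of $\Phi$ and a routine logarithm estimate, so I expect no further difficulty.
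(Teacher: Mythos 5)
Your proposal is correct and follows essentially the same route as the paper: bound $\absbig{\Phi(\rhoAB)-\Phi(\tauAB)}$ by $d_\textnormal{min}\,\delta$ via sub-additivity and the dimension bound of Property~iv), then convert to a logarithmic bound using the lower bound $\Phi(\rhoAB)\geq \tr\,\rhoAB/\idx{d}{A}$. The only (immaterial) difference is that you route the indefinite difference through $\rhoAB \leq \tauAB + \{\rhoAB-\tauAB\}_+$ and monotonicity, whereas the paper applies sub-additivity of the extended functional directly to the Hermitian operator $\tauAB-\rhoAB$ and invokes the $\tr\,\{\cdot\}_+$ form of the upper bound.
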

\begin{proof}
  We use continuity of the functional $\Phi$ to obtain
  \begin{align*}
    \Phi(\tauAB) &= \Phi(\rhoAB + (\tauAB - \rhoAB)) \leq
    \Phi(\rhoAB) + \Phi(\tauAB - \rhoAB) \\
    & \leq \Phi(\rhoAB) + d_\textnormal{min}\,\tr\{\tauAB -
    \rhoAB \}_+ \leq \Phi(\rhoAB) + d_\textnormal{min}\, \delta\, .
  \end{align*}
  Note that $\Phi > 0$ for all states in $\subnormstates{\hAB}$.
  Taking the logarithm and using the bound $\ln (a + x) \leq \ln a + \frac{x}{a}$, we find
  \begin{align*}
    \log \Phi(\tauAB) - \log \Phi(\rhoAB) \leq
    \frac{d_\textnormal{min}\,\delta}{\ln 2 \cdot \Phi(\rhoAB)} \leq \frac{\idx{d}{A}
      d_\textnormal{min}\, \delta}{\ln 2 \cdot \tr\,\rhoAB} \, .
  \end{align*}
  The same argument also applies on exchange of $\rhoAB$ and $\tauAB$
  and we obtain the statement of the lemma by substituting $\chmin{A}{B}{\rho} = - \log
  \Phi(\rhoAB)$.
\end{proof}

\begin{remark}
  The above result is tight in the following sense: Consider a system
  with Hilbert spaces $\hA$ and $\hB = \hA' \oplus \hB'$, where $\hA'
  \iso \hA$. Let $\psiAB$ be the normalized fully entangled state on
  $\hA \kron \hA'$ and $\rhoB \in \subnormstates{\hB'}$ be orthogonal
  to $\psiB$. The choice $\rhoAB = \frac{\idA}{\idx{d}{A}} \kron
  \rhoB$ and $\tauAB = \rhoAB + \delta\, \psiAB$ for some small
  $\delta > 0$ leads to $\dist(\rhoAB, \tauAB) = \delta$,
  $$\Phi(\rhoAB) = \frac{\tr\,\rhoB}{\idx{d}{A}} \quad \textrm{and}
  \quad \Phi(\tauAB) =
  \Phi(\rhoAB) + d_\textnormal{min}\, \delta\, .$$
  Taking the logarithm (for small $\delta$) leads to
  $$\log \Phi(\tauAB) - \log \Phi(\rhoAB) \approx \frac{
    d_\textnormal{min}\, \delta}{\ln 2 \cdot \Phi(\rhoAB)} = \frac{\idx{d}{A}
    d_\textnormal{min}\,\delta }{ \ln 2 \cdot \tr\,\rhoAB} \, .$$
\end{remark}

Lemma~\ref{lemma:min-cont} implies that the conditional min-entropy
is uniformly (Lipschitz) continuous on the set of normalized states
and in any $\eps$-ball. Since $\bar{D}(\rho, \tau) \leq P(\rho,
\tau)$ (cf.\ Lemma~\ref{lemma:metric-bounds}),
Lemma~\ref{lemma:min-cont} also holds for $\delta = P(\rhoAB,
\tauAB)$.

The continuity of the smooth min- and max-entropies then follows:
Let $\rhotAB \in \epsball{\eps}{\rhoAB}$ be such that
$\chmineps{\eps}{A}{B}{\rho} = \chmin{A}{B}{\rhot}$. We now
construct a state $\tautAB$ that is $\eps$-close to $\tauAB$ and
$\delta'$-close to $\rhotAB$, where $\delta' := \sqrt{\delta^2 +
  2\eps\delta}$.\footnote{The construction is as follows: Let $c :=
  (\delta + \eps)^2$, $\varphi$ be a purification of
  $\rhotAB \oplus (1 - \tr\,\rhotAB)$ and $\vartheta$ a be purification
  of $\tauAB \oplus (1 - \tr\,\tauAB)$ such that
  $\abs{\braket{\varphi}{\vartheta}}^2 = \bar{F}(\rhotAB,
  \tauAB)^2 \geq 1 - c$. We choose $\tautAB := c^{-1} (\eps^2
  \rhotAB + \delta'^2 \tauAB)$.
  Now, $P(\tautAB, \tauAB)^2 \leq P \big(
  c^{-1}(\eps^2 \varphi + \delta'^2 \vartheta), \vartheta \big)^2 = 1 -
  c^{-1} \bracket{\vartheta}{\eps^2 \varphi + \delta'^2
    \vartheta}{\vartheta} \leq 1 - c^{-1} \big(\delta'^2 + \eps^2 ( 1 -
  c) \big) = \eps^2$. Similarly, $P(\tautAB,
  \rhotAB) \leq \delta'$.}
We get
\begin{align*}
  \chmineps{\eps}{A}{B}{\rho} - \chmineps{\eps}{A}{B}{\tau} \leq
  \chmin{A}{B}{\rhot} - \chmin{A}{B}{\taut} \, ,
\end{align*}
which vanishes continuously for $\delta \to 0$ due to
Lemma~\ref{lemma:min-cont}.
The continuity of the smooth max-entropy follows by
duality~\eqref{eqn:duality}. Using
Lemma~\ref{lemma:purified-monotonicity}, we introduce purifications
$\rhoABC$ of $\rhoAB$ and $\tauABC$ of $\tauAB$ such that $P(\rhoAB,
\tauAB) = P(\rhoABC, \tauABC) \geq P(\rhoAC, \tauAC)$. Then,
$$\big| \chmaxeps{\eps}{A}{B}{\rho} - \chmaxeps{\eps}{A}{B}{\tau}
\big| = \big| \chmineps{\eps}{A}{C}{\rho} -
\chmineps{\eps}{A}{C}{\tau} \big|\, $$ which can be bounded using
Lemma~\ref{lemma:min-cont} with $d_\textrm{min} = \idx{d}{A}$.

\subsection{Concavity of the Max-Entropy}
\label{app:concave}

The max-entropy is a concave function of the state.

\begin{lemma}
  Let $\{p_i\}_i$ be a probability distribution, $\{ \rhoAB^i \}_i$ be
  a set of states in
  $\subnormstates{\hAB}$ and $\tauAB := \sum_i p_i\, \rhoAB^i$. Then,
  $$ \chmax{A}{B}{\tau} \geq \sum_i p_i\, \chmax{A}{B}{\rho^i} \, . $$
\end{lemma}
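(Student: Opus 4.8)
The plan is to work directly from the fidelity characterization of the max-entropy in~\eqref{eqn:max-alt}, feeding one carefully chosen candidate state into the optimization defining $\chmax{A}{B}{\tau}$. The only two ingredients I expect to need are the \emph{joint} concavity of the fidelity and the concavity of the binary logarithm; the rest is bookkeeping.

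First, for each $i$ I would let $\sigmaB^i \in \normstates{\hB}$ attain the maximum in~\eqref{eqn:max-alt} for $\rhoAB^i$, so that $2^{\chmax{A}{B}{\rho^i}} = F(\rhoAB^i, \idA \kron \sigmaB^i)^2$. I then form the convex combination $\sigmaB := \sum_i p_i\, \sigmaB^i$, which is again an element of $\normstates{\hB}$ and hence a legitimate candidate in the optimization for $\chmax{A}{B}{\tau}$. Since $\tauAB = \sum_i p_i\, \rhoAB^i$ and $\idA \kron \sigmaB = \sum_i p_i\, (\idA \kron \sigmaB^i)$, the joint concavity of the fidelity (cf.~\cite{nielsen00}) gives
\begin{align*}
  2^{\chmax{A}{B}{\tau}} &\geq F(\tauAB, \idA \kron \sigmaB)^2 \\
  &\geq \Big( \sum_i p_i\, F(\rhoAB^i, \idA \kron \sigmaB^i) \Big)^2 .
\end{align*}
Both sides of this chain are non-negative, so squaring preserves the inequality. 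Taking the binary logarithm and applying Jensen's inequality to the concave function $\log$ then yields
\begin{align*}
  \chmax{A}{B}{\tau} &\geq 2 \log \sum_i p_i\, F(\rhoAB^i, \idA \kron \sigmaB^i) \\
  &\geq \sum_i p_i\, 2 \log F(\rhoAB^i, \idA \kron \sigmaB^i) = \sum_i p_i\, \chmax{A}{B}{\rho^i} ,
\end{align*}
which is the assertion.

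The one step that is not completely routine is the appeal to the \emph{joint} concavity $F(\sum_i p_i \rho_i, \sum_i p_i \sigma_i) \geq \sum_i p_i F(\rho_i, \sigma_i)$: this is genuinely stronger than the single-argument concavity already used in Property~i), since here both arguments are mixed simultaneously, and I would cite the standard result (equivalently, the strong concavity of the fidelity in~\cite{nielsen00}). An alternative that stays closer to the paper's own machinery is to introduce the classical register $\idx{\tau}{ABX} := \sum_i p_i\, \rhoAB^i \kron \proji{i}{i}{X}$, apply the data-processing inequality (Theorem~\ref{thm:data-proc}) under the partial trace over $X$ to obtain $\chmax{A}{B}{\tau} \geq \chmax{A}{BX}{\tau}$, and then lower-bound $\chmax{A}{BX}{\tau}$ by a block-diagonal candidate $\sigma_{BX} = \sum_i q_i\, \sigmaB^i \kron \proji{i}{i}{X}$; optimizing the resulting block-additive fidelity over the classical weights $q_i$ by Cauchy--Schwarz reproduces the same bound. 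This variant, however, ultimately rests on the same concavity-type property, so I would present the direct argument above as the cleaner route.
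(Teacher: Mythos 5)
Your argument is correct, but it takes a genuinely different route from the paper's. You work directly from the characterization~\eqref{eqn:max-alt}: choose the optimizer $\sigmaB^i$ for each $\rhoAB^i$, feed the convex combination $\sigmaB = \sum_i p_i\,\sigmaB^i$ into the maximization for $\tauAB$, and conclude via the joint (strong) concavity of the fidelity plus Jensen's inequality for the logarithm. The paper instead purifies each $\rhoAB^i$, superposes the purifications coherently with two classical registers $Y$ and $Z$, discards $Y$ using the data-processing inequality (Theorem~\ref{thm:data-proc}), passes to the dual min-entropy, and uses the additivity of the functional $\Phi$ on states with orthogonal marginals before invoking concavity of the logarithm. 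Each route buys something: yours is shorter and essentially self-contained given~\eqref{eqn:max-alt}, but it rests on strong concavity of $F$ in a slightly non-textbook regime --- the $\rhoAB^i$ are sub-normalized and the second arguments $\idA \kron \sigmaB^i$ have trace $\dim \hA$, whereas the standard statement (Nielsen--Chuang, Theorem~9.7) is phrased for density operators. The extension to arbitrary positive semi-definite operators is true and follows from the same Uhlmann-type purification argument, but you should flag it rather than cite the normalized version silently. The paper's construction avoids strong concavity altogether, relying only on machinery it has already established (duality, data processing, and the block-additivity of $\Phi$), at the cost of a more elaborate setup; your closing sketch with the classical register and Theorem~\ref{thm:data-proc} is in fact much closer in spirit to what the paper actually does.
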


\begin{proof}
  Let $\keti{\varphi^i}{ABC}$ purify $\rhoAB^i$ such that the state
  $\keti{\tau}{ABCYZ} := \sum_i \sqrt{p_i}\,
  \keti{\varphi^i}{ABC} \kron \keti{i}{Y} \kron \keti{i}{Z}$ --- where
  $\{ \keti{i}{Y} \}_i$ and $\{ \keti{i}{Z} \}_i$ are orthonormal
  bases of the auxiliary Hilbert spaces $\hY$ and $\hi{Z}$,
  respectively --- has marginals $\tauAB$ and
  $\idx{\tau}{ACZ} = \sum_i p_i\,\rhoAC^i \kron \proji{i}{i}{Z}$.
  Using data-processing of the max-entropy
  (Theorem~\ref{thm:data-proc}), the properties of $\Phi$ and the
  concavity of the logarithm, we find
  \begin{align}
    \chmax{A}{B}{\tau} & \geq \chmax{A}{BY}{\tau} =  -
    \chmin{A}{CZ}{\tau} \nonumber\\
    & = \log \Phi (\idx{\tau}{A(CZ)}) = \log \Big( \sum_i p_i\,
      \Phi(\idx{\rho}{AC}^i) \Big) \nonumber\\
    & \geq \sum_i p_i \log \Phi(\idx{\rho}{AC}^i) = \sum_i p_i \,
    \chmax{A}{B}{\rho^i} \, . \nonumber
  \end{align}
\end{proof}

Note also that the min-entropy is neither a concave nor a convex function
of the state.

\section*{Acknowledgment}

We thank Nilanjana Datta, J\"urg Wullschleger and Christian Schaffner for fruitful discussions and comments.
We acknowledge support from the Swiss National Science Foundation (grant No. 200021-119868).


\bibliographystyle{apsrev}

\end{document}